%
%

\documentclass[reqno]{amsart}

\usepackage{amssymb,amsmath,amsthm,amsfonts,mathtools,bbm,mathrsfs,enumitem,graphics,float,multirow}
\usepackage[normalem]{ulem}
\usepackage{hyperref}
\hypersetup{
    colorlinks,
    linkcolor={blue!80!black},
    citecolor={red!80!black},
    urlcolor={blue!80!black}
}
\usepackage{xargs}                      
\usepackage{xcolor}  
\usepackage[colorinlistoftodos,prependcaption,textsize=tiny]{todonotes}
\newcommandx{\unsure}[2][1=]{\todo[linecolor=red,backgroundcolor=red!25,bordercolor=red,#1]{#2}}
\newcommandx{\change}[2][1=]{\todo[linecolor=blue,backgroundcolor=orange!25,bordercolor=blue,#1]{#2}}
\newcommandx{\info}[2][1=]{\todo[linecolor=OliveGreen,backgroundcolor=OliveGreen!25,bordercolor=OliveGreen,#1]{#2}}

\interdisplaylinepenalty=5000
\usepackage[top=1.6in,bottom=1.5in,left=1.6in,right=1.6in]{geometry}
\allowdisplaybreaks
\setlength{\belowcaptionskip}{-0pt}
\textfloatsep 0.0pt plus 2.0pt minus 4.0pt
\usepackage{cellspace}
\setlength\cellspacetoplimit{5pt}
\setlength\cellspacebottomlimit{5pt}

\newcommand\redout{\bgroup\markoverwith{\textcolor{red}{\rule[0.5ex]{2pt}{0.8pt}}}\ULon}

\newtheorem{theorem}{Theorem}
\newtheorem{lemma}[theorem]{Lemma}

\newtheorem{proposition}[theorem]{Proposition}

\newtheorem{corollary}[theorem]{Corollary}

\usepackage{tikz}
\usetikzlibrary{calc}

\tikzset{font= }

\newcommand\nc\newcommand
\nc\bfa{{\boldsymbol a}}\nc\bfA{{\boldsymbol A}}\nc\cA{{\mathscr A}}
\nc\bfb{{\boldsymbol b}}\nc\bfB{{\boldsymbol B}}\nc\cB{{\mathscr B}}
\nc\bfc{{\boldsymbol c}}\nc\bfC{{\boldsymbol C}}\nc\cC{{\mathscr C}}
\nc\bfd{{\boldsymbol d}}\nc\bfD{{\boldsymbol D}}\nc\cD{{\mathscr D}}
\nc\bfe{{\boldsymbol e}}\nc\bfE{{\boldsymbol E}}\nc\cE{{\mathscr E}}
\nc\bff{{\boldsymbol f}}\nc\bfF{{\boldsymbol F}}\nc\cF{{\mathscr F}}
\nc\bfg{{\boldsymbol g}}\nc\bfG{{\boldsymbol G}}\nc\cG{{\mathscr G}}
\nc\bfh{{\boldsymbol h}}\nc\bfH{{\boldsymbol H}}\nc\cH{{\mathscr H}}\nc\fH{{\mathfrak H}}
\nc\bfi{{\boldsymbol i}}\nc\bfI{{\boldsymbol I}}\nc\cI{{\mathcal I}}
\nc\bfj{{\boldsymbol j}}\nc\bfJ{{\boldsymbol J}}\nc\cJ{{\mathscr J}}
\nc\bfk{{\boldsymbol k}}\nc\bfK{{\boldsymbol K}}\nc\cK{{\mathscr K}}
\nc\bfl{{\boldsymbol l}}\nc\bfL{{\boldsymbol L}}\nc\cL{{\mathscr L}}
\nc\bfm{{\boldsymbol m}}\nc\bfM{{\boldsymbol M}}\nc\cM{{\mathscr M}}
\nc\bfn{{\boldsymbol n}}\nc\bfN{{\boldsymbol N}}\nc\sN{{\mathscr N}}
\nc\bfo{{\boldsymbol o}}\nc\bfO{{\boldsymbol O}}\nc\cO{{\mathscr O}}
\nc\bfp{{\boldsymbol p}}\nc\bfP{{\boldsymbol P}}\nc\cP{{\mathscr P}}
\nc\bfq{{\boldsymbol q}}\nc\bfQ{{\boldsymbol Q}}\nc\cQ{{\mathscr Q}}
\nc\bfr{{\boldsymbol r}}\nc\bfR{{\boldsymbol R}}\nc\cR{{\mathscr R}}
\nc\bfs{{\boldsymbol s}}\nc\bfS{{\boldsymbol S}}\nc\cS{{\mathscr S}}
\nc\bft{{\boldsymbol t}}\nc\bfT{{\boldsymbol T}}\nc\cT{{\mathscr T}}
\nc\bfu{{\boldsymbol u}}\nc\bfU{{\boldsymbol U}}\nc\cU{{\mathscr U}}
\nc\bfv{{\boldsymbol v}}\nc\bfV{{\boldsymbol V}}\nc\cV{{\mathscr V}}
\nc\bfw{{\boldsymbol w}}\nc\bfW{{\boldsymbol W}}\nc\cW{{\mathscr W}}
\nc\bfx{{\boldsymbol x}}\nc\bfX{{\boldsymbol X}}\nc\cX{{\mathscr X}}
\nc\bfy{{\boldsymbol y}}\nc\bfY{{\boldsymbol Y}}\nc\cY{{\mathscr Y}}
\nc\bfz{{\boldsymbol z}}\nc\bfZ{{\boldsymbol Z}}\nc\cZ{{\mathscr Z}}


\renewcommand{\le}{\leqslant}
\renewcommand{\leq}{\leqslant}
\renewcommand{\ge}{\geqslant}


\DeclareMathOperator{\rank}{rk}

\nc{\Cay}{{\sf Cay}}

\newcommand{\field}[1]{\mathbb{#1}}
\newcommand{\F}{\field{F}}
\nc{\ff}{{\mathbb F}}

\newcommand\remove[1]{}

\begin{document}
\title[]{Storage codes on coset graphs with asymptotically unit rate}
\author[]{Alexander Barg}\address{Department of ECE and Institute for Systems Research, University of Maryland, College Park, MD 20742, USA}\email{abarg@umd.edu}
\author[]{Moshe Schwartz}\address{School of Electrical and Computer Engineering, Ben-Gurion University of the Negev, Beer Sheva 8410501, Israel, and Department of Electrical and Computer Engineering, McMaster University, Hamilton, Ontario L8S 4K1, Canada}\email{schwartz.moshe@mcmaster.ca}
\author[]{Lev Yohananov}\address{School of Electrical and Computer Engineering, Ben-Gurion University of the Negev, Beer Sheva 8410501, Israel, and Institute for Systems Research, University of Maryland, College Park, MD 20742, USA}\email{levyuhananov@gmail.com}

\begin{abstract} A storage code on a graph $G$ is a set of assignments of symbols to the vertices such that every vertex can recover its value by looking at its neighbors. We consider the question of constructing large-size storage codes on triangle-free graphs constructed as coset graphs of binary linear codes. Previously it was shown that there are infinite families of binary storage codes on 
coset graphs with rate converging to 3/4. Here we show that codes on such graphs can attain rate asymptotically approaching 1.

Equivalently, this question can be phrased as a version of hat-guessing games on graphs (e.g., P.J. Cameron e.a., \emph{Electronic J. Comb.} 2016). In this language, we construct triangle-free graphs with success probability of the players approaching one as the number of vertices tends to infinity. 
Furthermore, finding linear index codes of rate approaching zero is also an equivalent problem.

Another family of storage codes on triangle-free graphs of rate approaching 1 was constructed earlier by A. Golovnev and I. Haviv (36th Computational Complexity Conf., 2021) relying on a different family of graphs.
\end{abstract}

\maketitle

\section{Introduction}
Suppose we are given a connected graph $G(V,E)$ on $N$ vertices. Denote by $\sN(v)$ the neighborhood of $v$ in $G$, i.e., the set of vertices of $G$ adjacent to $v$, and let $\ff_q$ be a finite field of size $q$. A \emph{storage code} for $G$ is a set $\cC$ of vectors $c=(c_v)_{v\in V}\in \ff_q^N$ such that for every $\bfc\in \cC$ and $v\in V$ the value of the coordinate $c_v$ is uniquely determined by the values of $c_w, w\in \sN(v)$.
 More formally, suppose that for every vertex $v$ there is a function $f_v:\ff_q^N\to \ff_q$ such that for every $\bfc\in \cC$ and every $v\in V$ we have $c_v=f_v((c_w)_{w\in \sN(v)}),$ where we assume some implicit fixed ordering of the vertices.

The concept of storage codes was introduced around 2014 in \cite{Mazumdar2015,Shanmugam2014} and studied in subsequent papers \cite{MazMcgVor2019,BargZemor2022}. 
In coding theory literature, this concept was motivated by a more general notion of codes with locality
\cite{gopalan2011locality}, which has enjoyed considerable attention during the last decade \cite{Ramkumar2022}.

The main problem associated with storage codes is constructing codes of large size, expressed
as a function of the parameters of the graph. To make the comparison easier, define the
\emph{code rate} $R_q(\cC,G):=\log_q(|\cC|)/N$, and let $R_q(G)=\max_{\cC} R_q(\cC,G)$. The first observation is that for dense graphs it is easy to construct codes of large size or rate: for instance, if the graph is complete, $G=K_N$, then the condition of vertex recovery from its neighbors will be satisfied if the codevectors satisfy a global parity check, i.e., $\sum_{v\in V} c_v=0$ for every $\bfc\in \cC$. In this case $|\cC|=q^{N-1}$ and $R(\cC)=R_q(K_N)=\frac{N-1}N$ for any $q$. A variant of this idea clearly applies if $G$ contains many cliques or large cliques, and therefore it makes sense to focus on the case of graphs with no cliques at all, i.e., triangle-free graphs. This case received considerable attention in the literature regarding both storage codes and guessing games on graphs which we discuss next.

The following version of guessing games on graphs, introduced in \cite{Riis2007}, turns out to be equivalent to the construction problem of storage codes. The vertices are assigned colors out of a finite set $Q$ of size $q,$ and each vertex attempts to guess its color based on the colors of its neighbors. The game is won if all the vertices correctly guess their colors. The strategy may be agreed upon before the start of the game, but no communication between the vertices is allowed once the colors have been assigned. Suppose that the assignment $x\in Q^N$ is chosen randomly from the available options, and the goal of the game is to minimize the probability of failure. The following strategy connects this game with storage codes. Let $\cC$ be a $q$-ary storage code for the graph. Every vertex $v$ assumes that the assignment is a codeword in $\cC$ and guesses its color from its neighbors, then the probability of success is $P_s(\cC,G)=\frac{|\cC|}{q^N}$. This quantity is sometimes expressed via the \emph{guessing number} of $G$, defined as $\text{\small\sf gn}_q(G)=N+\log_q\max P_s(\cC,G)$ \cite{Cameron2016}, so in our notation $\frac1N (\text{\small\sf gn}_q(G))=R_q(G)$.

It was soon realized that the problem of finding storage codes is also equivalent to constructing procedures for linear index coding with \emph{side information graph} $G$. We refer to the introduction of \cite{BargZemor2022} for a brief overview of the known results on storage codes as well as the connections to guessing and linear index coding, and to \cite{AK2018} for a more detailed presentation. 
In particular, as shown in \cite{AloLubStaWeiHas2008,Mazumdar2015}   
   $$
   1-I_q(G)\le R_q(G)\le 1-I_q(G)+N^{-1}\log_q(N\ln q),
   $$
where $I_q(G)$ is the rate of the linear index code on $G$.

This paper further develops the work started in \cite{BargZemor2022}, and it shares with it the general approach to the construction 
of storage codes. In particular, as in \cite{BargZemor2022}, we confine ourselves to {\em linear binary codes} and we assume that
the repair function of a vertex $f_v$ is simply a parity check. In other words, $c_v=\sum_{w\in \sN(v)}c_w$ holds for all $c\in \cC$ and
all $v\in V$, where the sum is computed modulo 2. This repair rule involves an implicit assumption that vertex recovery relies on the \emph{full parity}, i.e., all the neighbors contribute to the recovery (the general definition does not include this stipulation). Thus, the parities of the vertices are given by the corresponding rows of the matrix $A+I$, where $A=A(G)$ is the adjacency matrix of $G$ and $I$ is the identity (in other words, we are adding a self-loop to every $v\in V$). For a given graph $G$ define the \emph{augmented adjacency matrix} $\tilde A(G)\coloneqq A(G)+I$. The dimension of the code $\cC$ equals $\dim(\cC)=N-\rank \tilde A(G)$, and thus we are interested in constructing graphs for which $\tilde A(G)$ has the smallest possible rank given the number of vertices $N$.

The assumption of using full parities is essential for our results. To explain the reasons, recall first that an $N\times N$ matrix $M$ is said to fit a graph $G(V,E)$ if $M_{u,v}=0$ whenever the vertices $u$ and $v$ with $u\ne v$ are not adjacent, and $M_{u,u}\ne 0$ for all $u\in V.$ The minimum rank of $G$ over $\ff_q$ is defined as $\text{minrk}_q(G)=\min\{\rank_{\ff_q}(M)\mid M\in \ff_q^{N\times N}, M \text{ fits } G\}.$ It is clear that the matrix $M$ can be used to define a storage code on $G$: as long as there are no rows with just one nonzero element, we can take $M$ instead of $\tilde A$ as the parity-check matrix.  This implies that 
    $$
    R_q(G)\ge 1-\text{minrk}_q(G)/N,
    $$
and it is also known that $I_q(G)\le \text{minrk}_q(G)/N$ \cite{BarYossef2011}. 
The minimum rank of graphs was introduced by Haemers \cite{Haemers1981} in his work on Shannon capacity of graphs, and it was used later in studies of linear index codes 
\cite{LubetzkyStav2009} and complexity of arithmetic circuits \cite{Codenotti2000}. 
In their recent work on a conjecture from \cite{Codenotti2000}, Golovnev and Haviv 
\cite{GolovnevHaviv2021} estimated the minimum rank of generalized Kneser graphs (recall that the vertices of $K^<(n,s,m)$ are all $s$-subsets of $[n]$ and two sets are adjacent if they intersect on at most $m$ elements). The next theorem combines Lemma 19 and Theorem 22 from \cite{GolovnevHaviv2021} (see also \cite{Haviv2019}) and is rephrased to match our notation and terminology (\cite{GolovnevHaviv2021} did not make a connection to either storage codes or index codes or guessing games).
\begin{theorem}\label{thm:GH21} {\rm(\cite{GolovnevHaviv2021})} Let $r$ be a multiple of 6. Then the graph
$K^<(r,\frac r2,\frac r6)$ contains no triangles, and its minimum rank over $\ff_2$ is at most
$2^{r h(1/3)}<2^{0.92r},$ where $h(x)=-x \log_2x -(1-x)\log_2(1-x)$ is the entropy function. Consequently, there exists a sequence of storage codes $\cC_r$ of length $N=\binom r{r/2}$ and rate $R(\cC_r)\ge 1-\frac1{\sqrt{\pi r}}{2^{-0.08r}}.$
 \end{theorem}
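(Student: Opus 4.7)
The theorem decomposes into three claims: triangle-freeness of $K^<(r, r/2, r/6)$, the rank bound $\text{minrk}_2 \leq 2^{r h(1/3)}$, and the storage-code rate estimate. I plan to handle them in this order, expecting the rank bound to be the main obstacle.

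Triangle-freeness is a short inclusion--exclusion calculation. Reading $K^<$ as ``strictly less than $m$'' (necessary, since three $r/2$-subsets with pairwise intersection exactly $r/6$ and empty triple intersection do exist in $[r]$), for any three pairwise adjacent $r/2$-subsets $A,B,C$ each pairwise intersection has size at most $r/6 - 1$, whence
\[
r \;\geq\; |A \cup B \cup C| \;=\; \tfrac{3r}{2} - (|A\cap B| + |A\cap C| + |B\cap C|) + |A\cap B\cap C|
\]
forces $|A\cap B| + |A\cap C| + |B\cap C| \geq r/2 + |A\cap B\cap C| \geq r/2$, contradicting the upper bound $3(r/6 - 1) = r/2 - 3$.

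The rank bound is the heart of the argument. The target $2^{r h(1/3)}$ matches, up to a polynomial factor, the entropy sum $\sum_{j \leq r/3} \binom{r}{j}$, which suggests the following template. For each $j \geq 0$ let $N_j$ be the $\ff_2$-matrix indexed by $r/2$-subsets with $(N_j)_{A, A'} = \binom{|A \cap A'|}{j} \bmod 2$; then $N_j = B_j B_j^T$ for the inclusion matrix $(B_j)_{A,T} = [T \subseteq A]$ with $|T| = j$, so $\rank N_j \leq \binom{r}{j}$. The plan is to find coefficients $c_j \in \ff_2$, supported on $j \in \{0, 1, \ldots, r/3\}$, such that $M = \sum_j c_j N_j$ fits the graph: $M_{A,A'} = p(|A \cap A'|)$ where $p(k) = \sum_j c_j \binom{k}{j}$ vanishes for all $k \in \{r/6, \ldots, r/2 - 1\}$ (distinct non-adjacent pairs) and satisfies $p(r/2) \neq 0$ (diagonal entries). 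Since only $r/3$ vanishing conditions are imposed and the dimension of the polynomial subspace is $r/3 + 1$, a valid $p$ exists by dimension count; the main technical hurdle is to verify the nondegeneracy $p(r/2) \neq 0$ over $\ff_2$ via Lucas-type identities for the $\binom{k}{j}$.

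Given the rank bound, the code rate follows routinely from the inequality $R_2(G) \geq 1 - \text{minrk}_2(G)/N$ recalled in the introduction: this produces storage codes $\cC_r$ on $N = \binom{r}{r/2}$ vertices of rate $\geq 1 - 2^{r h(1/3)}/\binom{r}{r/2}$. Applying Stirling's formula $\binom{r}{r/2} \sim 2^r/\sqrt{\pi r/2}$ and the numerical estimate $1 - h(1/3) = \log_2 3 - 2/3 > 0.081$ yields the stated bound $R(\cC_r) \geq 1 - (\pi r)^{-1/2} 2^{-0.08 r}$ after absorbing a constant factor into the exponent.
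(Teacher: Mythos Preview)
The paper does not prove this theorem; it is quoted from \cite{GolovnevHaviv2021} (their Lemma~19 and Theorem~22), so there is no in-paper argument to compare against. Your outline is the standard inclusion-matrix approach to minrank on Kneser-type graphs and would succeed, with the following caveats.

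The nondegeneracy you flag cannot be handled by reducing a rational solution mod~$2$: the unique degree-$r/3$ integer polynomial vanishing on $\{r/6,\dots,r/2-1\}$ takes the value $(r/3)!$ at $k=r/2$, which is even. The correct route is to work directly over $\ff_2$ and observe that the $(r/3{+}1)\times(r/3{+}1)$ matrix $\bigl(\binom{k}{j}\bigr)$ with rows $k\in\{r/6,\dots,r/2\}$ and columns $j\in\{0,\dots,r/3\}$ is a shifted Pascal matrix of determinant $1$ over $\mathbb{Z}$ (iterate the row operation $\text{row}_{i+1}\leftarrow\text{row}_{i+1}-\text{row}_i$ using Pascal's rule), hence invertible over $\ff_2$; the system ``$p=0$ on $\{r/6,\dots,r/2-1\}$, $p(r/2)=1$'' then has a unique $\ff_2$-solution, and your rank bound $\sum_{j\le r/3}\binom{r}{j}\le 2^{rh(1/3)}$ follows. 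Your reading of $K^<$ as a strict inequality is also correct and forced by the mathematics: under the paper's ``at most $m$'' paraphrase your inclusion--exclusion gives equality rather than a contradiction, and that equality is realizable. Finally, in the rate line Stirling puts the $\sqrt{r}$ factor in the numerator, $\text{minrk}/N\lesssim \sqrt{\pi r/2}\cdot 2^{-r(1-h(1/3))}$; the slack $1-h(1/3)\approx 0.0817>0.08$ still absorbs it, so you obtain $R\ge 1-2^{-0.08r}$ for large $r$, though not literally the $1-(\pi r)^{-1/2}2^{-0.08r}$ stated in the paper.
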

The result of \cite{GolovnevHaviv2021} is in fact more general than stated above in several respects. Namely, the Kneser graphs $K^<(r,\frac r2,m)$ with $m\le n/(2l)$ do not contain odd cycles of any length up to $l$, the estimate of their minimum rank is valid for all finite fields, and by considering subgraphs of $K^<$ one can extend the main claim to code lengths other than those in the theorem. Related to this, it has been shown in \cite{Golovnev2018} that the minimum rank of general random graphs from the Erd{\H o}s--R{\'e}nyi ensemble behaves as $\Theta(N/\log N),$ i.e., is sublinear in $N$ over any fixed finite field.

We now return to the main topic of this paper. For a $d$-regular graph it is easy to construct a code of rate 1/2. This is a folklore result accomplished by placing bits on the edges and
assigning to every vertex the $d$-vector of bits written on the edges incident to it. 
The rate of the obtained code over the alphabet of size $2^d$ is readily seen to be 1/2.
Several codes of higher rates were constructed in \cite{Cameron2016}. Namely, its authors observed
that each of the following (regular, triangle-free) graphs: the Clebsch graph on 16 vertices, the Hoffman-Singleton graph
on 50 vertices, the Gewirtz graph on 56 vertices, and the Higman-Sims graph on 100 vertices yield storage codes of rate above 1/2; the highest rate of 0.77 attained by the last of these. See \cite[Prop.~9]{Cameron2016} for details.

In their search of higher-rate codes, the authors of \cite{BargZemor2022} studied triangle-free graphs formed as coset graphs of linear codes (see Sec.~\ref{sec:4} for their definition). They constructed an infinite family of graphs $G_r$ with $|V(G_r)|=2^{r+1}$ vertices, $r\ge 4$, that admit storage codes of rate 
   $
   R=\frac 34-2^{-r}.
   $
Moreover, \cite{BargZemor2022} also gave an example of a triangle-free graph on $N=2^{16}$ vertices 
with $\rank(\tilde A)=11818/65536$, i.e., code rate slightly above $0.8196$. The authors of 
\cite{BargZemor2022} speculated that there may exist families of triangle-free graphs that admit 
storage codes of rate approaching one. Without the assumption of full parities, an affirmative answer 
follows already from Theorem~\ref{thm:GH21}. As our main result, we show that this is also true if 
the code is constructed from the full adjacency matrix of its graph.

\begin{theorem} \label{thm:main} There exists an infinite family of connected triangle-free graphs $(G_m)_m$ on $N=N(m)$ vertices such that
$\frac 1N\rank(\tilde A(G_m))\to 0$ as $m\to\infty$. In other words, 
    $$
    \lim_{m\to\infty} R_2(G_m)=1
    $$
and 
    $$
    {\sf gn}_2(G_m)=N(1-o(1)).
    $$
\end{theorem}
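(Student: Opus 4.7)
My plan is to realize $G_m$ as a coset graph of a binary linear code, reducing the theorem to a question about weight distributions via a standard character computation. For a binary linear code $C \subseteq \mathbb{F}_2^n$, the coset graph $G(C)$ is the Cayley graph on $\mathbb{F}_2^n/C$ with connection set $\{e_1,\dots,e_n\}$; it has $N = 2^{n-\dim C} = |D|$ vertices (where $D := C^\perp$) and is a triangle-free simple $n$-regular graph iff $d(C) \ge 4$. The characters of $\mathbb{F}_2^n/C$ are indexed by $y \in D$, and a direct computation identifies the eigenvalue of $\tilde A(G(C))$ at the character $\chi_y$ as $\lambda_y = n+1-2\,\mathrm{wt}(y)$. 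For odd $n$, therefore,
\[
\rank\bigl(\tilde A(G(C))\bigr) \;=\; |D| - A_{(n+1)/2}(D),
\]
where $A_w(D)$ denotes the number of codewords of $D$ of weight $w$. Thus it suffices to produce a sequence of binary linear codes $D_m$ of odd length $n_m \to \infty$ with (i) $d(D_m^\perp) \ge 4$ and (ii) $f_m := A_{(n_m+1)/2}(D_m)/|D_m| \to 1$; then $G_m := G(D_m^\perp)$ satisfies the claim, with $N(m) = |D_m|$.

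The key tension between (i) and (ii) is structural: by Bonisoli's theorem, a binary linear code in which every nonzero codeword has the same weight must, up to equivalence, be a replicated simplex code, and all such codes have dual minimum distance at most $3$. The exemplar simplex code $S_{m_0}$ of length $n_0 = 2^{m_0}-1$ has all $2^{m_0}-1$ nonzero codewords of weight $2^{m_0-1} = (n_0+1)/2$, yielding $f_0 = 1-2^{-m_0}$, but its dual is the Hamming code with minimum distance $3$ (whose coset graph is full of triangles). Any construction satisfying both (i) and (ii) must therefore be an \emph{almost} one-weight code, in which the extra codewords needed to lift the dual distance from $3$ to $4$ occupy only a vanishing fraction of the code.

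Concretely, I would enlarge a simplex code to $D_m \supset S_{m_0}$ by adjoining additional coset representatives on an extended coordinate set. Each new coset $g + S_{m_0}$ contributes codewords whose weights are governed by the overlap $|\supp(g) \cap \supp(s)|$ with the simplex codewords $s \in S_{m_0}$; pinning this overlap to $\mathrm{wt}(g)/2$ for most $s$ is a Fourier-analytic condition on the indicator of $\supp(g)$. Choosing $g$ and the extension coordinates with sufficiently structured Fourier spectrum should force the added cosets also to have an overwhelming majority of codewords at the updated target weight $(n_m+1)/2$; the dual-distance-$4$ condition (i) is then a direct check that no three columns of the generator matrix of $D_m$ are linearly dependent. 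Extending the coset-enlargement strategy of \cite{BargZemor2022} (which produced $f_m \to 3/4$), the goal is to design an iterated extension scheme whose deficit $1-f_m$ decays with the number of levels.

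The main obstacle is quantitative control across the iteration: maintaining weight concentration at $(n_m+1)/2$ as the dimension of $D_m$ grows, while simultaneously preventing any triple of columns of its generator matrix from becoming linearly dependent. The proof rests on a careful choice of coset representatives and extension coordinates balancing these two demands, together with a weight-enumerator computation verifying $f_m \to 1$.
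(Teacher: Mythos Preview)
Your spectral reduction is sound as far as it goes: the characters of $\mathbb{F}_2^n/C$ diagonalise $\tilde A$ over $\mathbb{Q}$ with eigenvalues $n+1-2\,\mathrm{wt}(y)$ for $y\in D=C^\perp$, so for odd $n$ one has $\rank_{\mathbb{Q}}(\tilde A)=|D|-A_{(n+1)/2}(D)$, and since $\rank_{\mathbb{F}_2}\le\rank_{\mathbb{Q}}$ for any integer matrix this upper-bounds the quantity in the theorem (you silently conflate the two ranks, but the repair is immediate). The genuine gap is that you never construct the codes. You correctly reduce the problem to producing almost-one-weight codes $D_m$ with $d(D_m^\perp)\ge4$ and $A_{(n_m+1)/2}(D_m)/|D_m|\to 1$, note the Bonisoli obstruction, and gesture at enlarging simplex codes by cosets on extended coordinates---but the proposal ends at ``should force'' and ``the proof rests on a careful choice,'' with no coset representatives specified, no weight enumerator computed, and no dual-distance check. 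Since exhibiting such a family \emph{is} the content of the theorem, what is missing is not a detail but the entire proof.

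For comparison, the paper's route is direct and entirely over $\mathbb{F}_2$, with no eigenvalues or weight distributions. It recursively builds explicit parity-check matrices $H_s$ from Hamming blocks $\fH_r$, then partitions the $N\times N$ adjacency matrix $A(G_s)$ into a $2^s\times 2^s$ array of commuting $2^{(s-1)r}\times 2^{(s-1)r}$ blocks indexed by $s$-bit prefixes (Lemma~\ref{lem:9}), and writes $A(G_s)$ as a sum of one all-$I$ block matrix of rank $2^{(s-1)r}$ plus $s-1$ block-permutation matrices each of rank at most $2^{(s-2)r+s}$, giving $\rank_{\mathbb{F}_2}(A(G_s))\le N(2^{-s}+2^{-r+1})$. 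Your framework, if completed, would recast the same construction as a weight-enumerator calculation for the row space of $H_s$ and might afford more flexibility in choosing $D_m$; but the paper's block calculus, though more hands-on, actually carries out both the construction and the rank bound, which your proposal leaves as an acknowledged ``main obstacle.''
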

If we let $R_2(N)=\max R_2(G_N)$ to be the largest code rate over all triangle-free graphs on $N$ vertices, then this theorem implies that 
$\limsup_{N\to\infty} R_2(N)=1$. Since $R_q(N)\ge R_2(N)$ for all alphabets with $q\ge 2$ a power of a prime, this claim is also true for all such $q$. Moreover, in the context of guessing games it has been shown that the storage capacity $R_q(G)$ is almost monotone in $q$. Namely, the following is true.
\begin{theorem}[\cite{Gadouleau2011,CM2011,Cameron2016}] For any graph $G$, alphabet size $q,$ and $\epsilon>0$
there exists $q_0(G,q,\epsilon)$ such that for all $q'>q_0$
   $$
   R_{q'}(G)\ge R_q(G)-\epsilon.
   $$
\end{theorem}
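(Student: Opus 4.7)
The plan is to show that any storage code over an alphabet $Q$ of size $q$ can be lifted, with an arbitrarily small rate loss, to a storage code over an arbitrary alphabet $Q'$ of any sufficiently large size $q'$, by combining a Cartesian power with a set-theoretic alphabet embedding. The key observation is that the storage-code definition only requires the existence of set-valued repair functions $f_v$, with no algebraic structure on the alphabet, so both steps are essentially free.

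Let $\cC\subseteq Q^N$ attain $R_q(\cC,G)=R_q(G)$, with repair functions $(f_v)_{v\in V}$. For any positive integer $k$, form the Cartesian power $\cC^k$ by regarding each $k$-tuple $(\bfc^{(1)},\ldots,\bfc^{(k)})\in\cC^k$ as a single word over the alphabet $Q^k$ of size $q^k$, stacking coordinates vertex-wise; the value at $v$ is $(c_v^{(1)},\ldots,c_v^{(k)})$ and is recovered from its neighbors by applying $f_v$ component-wise, so $\cC^k$ is a storage code on $G$ of cardinality $|\cC|^k$. Given $q'$, choose $k:=\lfloor\log_q q'\rfloor$ so that $q^k\leq q'<q^{k+1}$, fix an arbitrary injection $\iota:Q^k\hookrightarrow Q'$, and apply $\iota$ coordinate-wise to $\cC^k$ to obtain a storage code $\cC'$ on $G$ over $Q'$ with $|\cC'|=|\cC|^k$.

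The rate of $\cC'$ satisfies
$$
R_{q'}(\cC',G)=\frac{\log_{q'}|\cC|^k}{N}=R_q(G)\cdot\frac{k\log q}{\log q'}\geq R_q(G)\cdot\frac{k}{k+1},
$$
where the last inequality uses $\log q'<(k+1)\log q$. Since $R_q(G)\leq 1$, setting $q_0:=q^{\lceil 1/\epsilon\rceil}$ guarantees that any $q'>q_0$ satisfies $k\geq\lceil 1/\epsilon\rceil$, hence
$$
R_{q'}(G)\geq R_{q'}(\cC',G)\geq R_q(G)-\frac{1}{k+1}\geq R_q(G)-\epsilon.
$$

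Because both the product construction and the alphabet embedding operate purely at the level of sets and functions, no real obstacle arises; the only point worth emphasizing is that the choice $k=\lfloor\log_q q'\rfloor$ makes $k\to\infty$ as $q'\to\infty$, driving the multiplicative factor $k/(k+1)$ to $1$ and thus the rate loss to $0$, matching the folklore derivations in the cited references.
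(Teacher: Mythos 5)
Your argument is correct. The paper itself gives no proof of this theorem---it is quoted from \cite{Gadouleau2011,CM2011,Cameron2016}---and your product-plus-embedding construction is precisely the standard folklore argument behind those references: the $k$-fold Cartesian power is a storage code over $Q^k$ because the repair functions act componentwise, the injection $\iota:Q^k\hookrightarrow Q'$ preserves the storage property since the definition only requires recovery to hold \emph{on codewords} (so $\iota^{-1}$ can be applied to neighbor values before invoking $f_v$, with arbitrary behavior off the image), and the bookkeeping with $k=\lfloor\log_q q'\rfloor$ and $R_q(G)\le 1$ correctly turns the multiplicative loss $k/(k+1)$ into an additive loss below $\epsilon$. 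Nothing is missing.
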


Graphs in the family $(G_m)_m$ are constructed as coset graphs of binary linear codes. Their explicit description will be given in  Sec.~\ref{sec:result} after we develop the tools for analyzing the rank of adjacency matrices of coset graphs. Along the way
we illustrate the usefulness of our approach by giving simple proofs of some of the earlier results concerning storage codes on 
coset graphs.

\section{The construction of \cite{BargZemor2022} and our approach}\label{sec:4}
To remind the reader the definition of coset graphs, suppose we are given a binary linear code 
$C\subset F^n$ with an $r\times n$ parity-check matrix $H$, where $F=\ff_2$. This means that $C=\ker(H)$ and $\dim(C)\ge n-r;$ in other words, $H$ is allowed to contain dependent rows. Denote by $S$ the set of columns of $H$. Now consider the Cayley graph $G=\Cay({F}^r,S)$ on the group $({F}^r)_+$ with generators in $S$,  where $r$ is the number of rows of $H$. The vertex set of $G$ is $V(G)={F}^r$, and a pair of vertices $v,v'$ are connected if there is a column (generator) $h\in S$ such that $v=v'+h$. Since the group is Abelian, the graph $G$ is undirected. The vertices of $G$ can be also viewed as cosets in $\F^n_2 /C$, with two cosets connected if and only if the Hamming distance between them (as subsets) is one. It is clear that $G$ is triangle-free once the minimum distance of the code $C$ is at least four. We will assume this throughout and note that in particular, this implies that the columns in $H$ are distinct.

Having constructed $G$ from $C$, we consider the code $\cC=\ker(\tilde A(G))$. Clearly, $\cC$ is a storage code for $G$. Below, to distinguish
between the seed codes $C$ and storage codes $\cC$, we call the former \emph{small codes} and the latter \emph{big codes}. 
Generally, finding the parameters of $\cC$ from the parameters of $C$ is a nonobvious task, and even if the small code
is simply a repetition code, computing the rank of $\tilde A$ is not immediate. This problem in fact was considered earlier in the context of constructing quantum codes \cite{Couvreur2013}, and we refer to \cite{BargZemor2022} for a brief discussion of this connection. The family of storage codes in \cite{BargZemor2022} is constructed starting with the parity-check matrix $H_r$ obtained from the parity-check matrix of the extended Hamming code of length $2^{r-1}$ (see Sec.~\ref{sec:result} for more details). Below we often find it convenient to include $0_r$ (the all-zero vector of length $r$) in the set $S$, and then the adjacency matrix $A(G)$ includes ones on the diagonal. Accordingly, in such cases we do not use the notation $\tilde A$ and write $A$ instead.

The approach taken in \cite{BargZemor2022} relies on detailed analysis of the action of the adjacency operator $A$ on the space
of functions $f:{F}^r\to{F}$. While it indeed yields the value of the rank for the graphs $G$ obtained from $H_r$, extending it
to other families of codes looks difficult. The approach taken here relies on the following observations. Generators $h\in S$ act on ${F}^r$ as permutations $\sigma_h$, and clearly $\sigma_h^2(v)=v$ for every $v\in {F}^r$. Therefore for a given $h\ne 0$, $\sigma_h$
can be written as a product of disjoint transpositions (cycles of length 2), 
    $
    \sigma_h=(v_0,v_1)(v_2,v_3)\dots(v_{2^{r}-2},v_{2^{r}-1}),
    $ 
where $v_{2i}+v_{2i+1}=h$ for all $i\ge 0$ (note that the labeling of the vectors depends on $h$). For any vector $v\in {F}^r$, not necessarily a generator, the action of $\sigma_v$ can be written as 
a $2^r \times 2^r$ permutation matrix, which we denote below by $\Gamma_v$. For every $v,u,w\in {F}^r$ we have $(\Gamma_v)_{u,w}=\mathbbm{1}(v=u+w)$, i.e., every row contains a single 1 in the column $u+v$. In particular, $\Gamma_0=I$. Denote by ${\cM}_r=\{\Gamma_v \mid v\in {F}^r\}$ the collection of matrices $\Gamma_v$ and note that it forms a multiplicative group 
$({\cM}_r)_\times\cong (F^r)_+$; thus, $\Gamma^2_v=I$ and $\Gamma_v\Gamma_w=\Gamma_w\Gamma_v$ for all $v,w$.

Our approach has a number of common features with \cite{BargZemor2022}, but it is phrased entirely in terms
of permutation matrices and their combinations, which supplant actions of the generators of the group. This view enables us to isolate subblocks of the adjacency matrix, and to manipulate those subblocks to rearrange the matrix so that it becomes possible to control the rank.

\section{Properties of permutation matrices}
In this section, we formulate some results about combinations (products and sums) of the matrices $\Gamma_v$, where the computations are performed modulo 2. Throughout we assume some fixed order of the vectors in $F^r$. 
Our first lemma expresses the following obvious fact: acting on $F^r$ first by $u$ and then by $v$ is the same as acting by $u+v$.
\begin{lemma}\label{lemma:p2s} 
	For any $v,w\in F^r$, $\Gamma_{{v}}\Gamma_{{w}} = \Gamma_{{v} + {w}}$.
\end{lemma}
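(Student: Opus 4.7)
The plan is a direct entry-by-entry verification using the definition of $\Gamma_v$ recalled just before the lemma, namely $(\Gamma_v)_{u,w}=\mathbbm{1}(v=u+w)$. Since both sides of the claimed identity are $2^r\times 2^r$ matrices indexed by vectors of $F^r$, it suffices to fix arbitrary $u,z\in F^r$ and compare the $(u,z)$-entries of $\Gamma_v\Gamma_w$ and $\Gamma_{v+w}$.

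Expanding the matrix product gives
$$(\Gamma_v\Gamma_w)_{u,z}=\sum_{y\in F^r}(\Gamma_v)_{u,y}(\Gamma_w)_{y,z}.$$
The factor $(\Gamma_v)_{u,y}$ is nonzero only for the unique index $y=u+v$, and $(\Gamma_w)_{y,z}$ is nonzero precisely when $z=y+w$. Substituting $y=u+v$ turns the second condition into $z=u+v+w$, equivalently $v+w=u+z$ in characteristic $2$; hence $(\Gamma_v\Gamma_w)_{u,z}=\mathbbm{1}(v+w=u+z)=(\Gamma_{v+w})_{u,z}$, which is the claim.

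Conceptually, the same computation expresses the fact that the defining identity $(\Gamma_v)_{u,w}=\mathbbm{1}(v=u+w)$ says precisely that $\Gamma_v$ is the permutation matrix of the translation $\sigma_v\colon F^r\to F^r$, $u\mapsto u+v$; indeed, acting on a standard basis vector gives $\Gamma_v e_w=e_{w+v}$. Composition of translations satisfies $\sigma_v\circ\sigma_w=\sigma_{v+w}$, so $v\mapsto\Gamma_v$ is the regular permutation representation of $(F^r)_+$ and is automatically multiplicative. One can write this alternative proof either as a one-line check on basis vectors or by invoking the general fact that permutation matrices compose in the same way as the underlying permutations.

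There is essentially no obstacle here: the lemma merely records that the assignment $v\mapsto\Gamma_v$ is a group homomorphism from $(F^r)_+$ into the $2^r\times 2^r$ permutation matrices, a fact already announced in the sentence that introduces $\cM_r$ as a multiplicative group isomorphic to $(F^r)_+$. The only genuine content is bookkeeping to make sure the row/column conventions in the indicator $\mathbbm{1}(v=u+w)$ are respected, and the direct computation above handles this explicitly.
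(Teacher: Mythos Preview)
Your proof is correct and follows essentially the same approach as the paper: both compute the $(u,z)$-entry of $\Gamma_v\Gamma_w$ by observing that the sum over the intermediate index $y$ has a single nonzero term at $y=u+v$, and then check that the resulting condition $v+w=u+z$ matches the definition of $(\Gamma_{v+w})_{u,z}$. Your additional conceptual remarks about the regular representation are a nice gloss but not needed for the argument.
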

\begin{proof}
First, $(\Gamma_v\Gamma_w)_{u,u'}=1$ if and only if $(\Gamma_v)_{u,z}=1$ and $(\Gamma_w)_{z,u'}=1$, i.e., $v=u+z$ and $w=z+u'$ for some $z\in F^r$, or $v+w=u+u'$.
On the other hand, by definition $(\Gamma_{v+w})_{u,u'}=1$ if and only if $v+w=u+u'$.
\end{proof}
\remove{\begin{corollary}\label{cor:2}
	For all $v,w\in F^r$,
	\begin{enumerate} 
		\item   $\Gamma^2_{{v}} = I$.\label{cor:2_1}
		\item $\Gamma_{{v}} \Gamma_{{w}} \in{M}_r$.\label{cor:2_2}
		\item $\Gamma_{{v}}\Gamma_{{w}} = \Gamma_{{w}}\Gamma_{{v}}$.\label{cor:2_3}
	\end{enumerate}
\end{corollary}}
For any $u,w\in F^r$ there exists exactly one $v\in F^r$ such that $(\Gamma_v)_{u,w}\ne 0$, and thus
$\sum_{v} \Gamma_v=1_{2^r\times 2^r}$ (the $2^r\times 2^r$ all-ones matrix). Thus, if a binary matrix $A$ can be written as a sum of permutations, this representation
is unique (up to an even number of repeated summands). Therefore, if $A_V:=\sum_{v\in V}\Gamma_v$, where $V\subseteq F^r$ is some subset, is a sum of permutation matrices, the quantity $s(A_V)\coloneqq|V|$ is well defined. Let $\cP_r\coloneqq\{A_V \mid V\subseteq F^r\}$ be the set of all sums of the permutation matrices. Below we usually suppress the subscript $V$ from the notation.
 
\begin{proposition} 
The set $\cP_r$ with operations $``+"$ and $``\cdot"$ forms a commutative matrix ring with identity. 
\end{proposition}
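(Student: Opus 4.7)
The plan is to verify the ring axioms one by one, relying on Lemma~\ref{lemma:p2s} for the interaction between the two operations. Since $\cP_r \subseteq \ff_2^{2^r \times 2^r}$ is a subset of the matrix ring over $\ff_2$, associativity of both operations, distributivity, and the existence of additive inverses (each element is its own inverse in characteristic two) are inherited for free; what must be checked is that $\cP_r$ is closed under $+$ and $\cdot$, contains $0$ and $I$, and that multiplication is commutative.

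First I would handle closure under addition. Given $A = \sum_{v \in V} \Gamma_v$ and $B = \sum_{w \in W} \Gamma_w$, the sum $A+B$ computed modulo $2$ equals $\sum_{u \in V \triangle W} \Gamma_u$, where $V \triangle W$ denotes symmetric difference, so $A+B \in \cP_r$. The zero matrix corresponds to $V = \emptyset$, and $I = \Gamma_0 \in \cP_r$.

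Next, closure under multiplication is the key step and uses Lemma~\ref{lemma:p2s} directly:
\begin{equation*}
A \cdot B \;=\; \sum_{v \in V}\sum_{w \in W} \Gamma_v \Gamma_w \;=\; \sum_{v \in V}\sum_{w \in W} \Gamma_{v+w} \;=\; \sum_{u \in F^r} n(u)\, \Gamma_u,
\end{equation*}
where $n(u) = |\{(v,w) \in V \times W : v+w = u\}|$. Reducing modulo $2$ gives $A\cdot B = \sum_{u \in U} \Gamma_u$ where $U = \{u : n(u) \text{ is odd}\}$, and hence $A\cdot B \in \cP_r$. Commutativity of $\cdot$ is immediate from the commutativity of the $\Gamma_v$ noted after the definition of $\cM_r$, which itself follows from Lemma~\ref{lemma:p2s} since $v+w = w+v$.

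I do not expect any real obstacles here; the only subtlety is noting that the representation $A = \sum_{v \in V}\Gamma_v$ gives a well-defined subset $V \subseteq F^r$ (already justified in the paragraph preceding the proposition by the identity $\sum_v \Gamma_v = \mathbf{1}_{2^r \times 2^r}$), which ensures the symmetric-difference and parity-counting descriptions above are unambiguous. With closure and commutativity established, the remaining ring axioms are inherited, and the proof is complete.
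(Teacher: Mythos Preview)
Your proof is correct and follows essentially the same approach as the paper's: both rely on Lemma~\ref{lemma:p2s} to reduce products of the $\Gamma_v$'s to sums, thereby establishing closure under multiplication and commutativity, with the additive structure being immediate from the definition. Your version simply spells out in more detail (symmetric difference for addition, the parity count $n(u)$ for multiplication) what the paper compresses into two sentences.
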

\begin{proof}
By definition, the matrices in $\cP_r$ form a commutative additive group. The remaining properties of the ring follow immediately from Lemma~\ref{lemma:p2s} which reduces sums of products of $\Gamma$'s to simply sums.
\end{proof}

As above, the rank in the next lemma is computed over $\ff_2$.
\begin{lemma} Let $A,B\in \cP_r$. 
	\begin{enumerate}\label{claim:4}
		\item If $s(A)$ is odd, then $A^2=I$ and hence $\rank(A)=2^r$. \label{claim:4_1}
		\item If both $s(A)$ and $s(B)$ are odd, then  $s(AB)$ is odd.\label{claim:4_2}
		\item If $s(A)$ is odd, then there is $C \in \cP_r$ such that $AC=CA=B$.\label{claim:4_3}
		\item For any $\Gamma\in\cM_r$, $\rank(A) = \rank(A\Gamma)$. \label{claim:4_5}
	\end{enumerate}
\end{lemma}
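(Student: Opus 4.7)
The plan is to attack the four parts in order, exploiting the commutative ring structure of $\cP_r$ established above together with Lemma~\ref{lemma:p2s}, which reduces every product $\Gamma_v \Gamma_w$ to $\Gamma_{v+w}$.

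For part (1), writing $A = \sum_{v \in V} \Gamma_v$ with $|V|$ odd, I would expand
\[
A^2 = \sum_{v,w \in V} \Gamma_v \Gamma_w = \sum_{v,w \in V} \Gamma_{v+w}.
\]
Off-diagonal pairs $(v,w)$ and $(w,v)$ with $v \ne w$ contribute the same summand $\Gamma_{v+w}$ and therefore cancel modulo $2$, while the diagonal $v = w$ contributes $|V|$ copies of $\Gamma_0 = I$. Since $|V|$ is odd, $A^2 = I$, so $A$ is invertible and $\rank(A) = 2^r$.

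For part (2), the key observation is that $s(M) \bmod 2$ equals the $\ff_2$-parity of any row sum of $M \in \cP_r$, because each $\Gamma_v$ contributes exactly one $1$ to every row (equivalently, $M_{i,j}=1$ iff $i+j$ belongs to the underlying set, so the row sum equals $|V|=s(M)$). Computing the $i$-th row sum of $AB$ then gives
\[
\sum_j (AB)_{i,j} = \sum_k A_{i,k}\Bigl(\sum_j B_{k,j}\Bigr) \equiv s(A)\, s(B) \pmod{2},
\]
which is odd when both $s(A)$ and $s(B)$ are odd. Since the same row sum of $AB\in\cP_r$ equals $s(AB) \bmod 2$, the claim follows. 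Part (3) is then immediate: setting $C = AB \in \cP_r$, part (1) gives $AC = A^2 B = B$, and commutativity of $\cP_r$ yields $CA = ABA = A^2 B = B$ as well. Finally, part (4) is the simplest: $\Gamma \in \cM_r$ is a permutation matrix, hence invertible over $\ff_2$ (indeed $\Gamma^2 = I$ by Lemma~\ref{lemma:p2s}), so right-multiplication by $\Gamma$ merely permutes the columns of $A$ and preserves rank.

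I do not anticipate any real obstacle here; the only step requiring even a light touch is part (2), where one must couple the abstract quantity $s(\cdot)$ to concrete matrix entries via row sums in order to make the parity multiply through the matrix product cleanly. Everything else is a direct consequence of Lemma~\ref{lemma:p2s} and the commutative ring structure already proved.
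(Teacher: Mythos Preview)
Your proof is correct. Parts (1), (3), and (4) match the paper's argument essentially verbatim: expand $A^2$ and cancel symmetric pairs, take $C=AB$ and invoke $A^2=I$, and observe that permutation matrices preserve rank.

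For part (2) you take a slightly different route. The paper simply expands $AB=\sum_{i,j}\Gamma_{v_i+w_j}$, notes this sum has $\ell m$ terms with $\ell m$ odd, and observes that any cancellations occur in pairs so the parity of the number of surviving summands is unchanged. Your argument instead realizes $s(\cdot)\bmod 2$ concretely as the $\ff_2$-row-sum functional and then pushes the parity through the matrix product via $\sum_j(AB)_{i,j}=\sum_k A_{i,k}\sum_j B_{k,j}$. Both are perfectly valid; the paper's version is a shade more direct since it never needs to leave the $\Gamma$-expansion, whereas yours has the minor conceptual payoff of identifying $s\bmod 2$ as a ring homomorphism $\cP_r\to\ff_2$ (it is clearly additive, and your computation shows it is multiplicative), which is a clean structural statement.
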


\begin{proof}		
(1) For odd $\ell$ let  $A = \sum^{\ell}_{i=1} \Gamma_{{v}_i}$ such that $\Gamma_{{v}_i}\in {\cM}_r$.
		Thus, 
		\begin{align*}
		A^2 & = \Big(\sum^{\ell}_{i=1} \Gamma_{{v}_i}\Big)^2 = \sum^{\ell}_{i=1} \Gamma^2_{{v}_i}  +  \sum_{i\neq j} \Gamma_{{v}_i+{v}_j}.
		\end{align*}
The last sum has every term appearing twice, so it vanishes mod 2, and the first is formed
of an odd number of identities $I$, so it equals $I$. Thus, $A$ is nonsingular.

(2) For $\ell$ and $m$ odd, let  $A = \sum^{\ell}_{i=1} \Gamma_{{v}_i},B =  \sum^{m}_{j=1} \Gamma_{{w}_j} \in \cP_r$ such that $\Gamma_{{v}_i},\Gamma_{{w}_j} \in {\cM}_r$.
		Therefore,
		\begin{align*}
		A  B &=  \sum^{\ell}_{i=1} \Gamma_{{v}_i}\cdot \sum^{m}_{j=1} \Gamma_{{w}_j} =  \sum^{\ell}_{i=1}\sum^{m}_{j=1} \Gamma_{{v}_i+{w}_j}.
		\end{align*}
Since $\ell$ and $m$ are odd, the number of terms in this sum is odd. If some of them are
repeated, they cancel in pairs, without affecting the parity.

(3) Let $C=AB$. Since $\cP_r$ is closed under multiplication, $C\in \cP_r$. Next, $CA=AC=A^2 B=B$.
 
(4) This is obvious since multiplying by a permutation matrix over any field preserves the rank.
	
\end{proof}

The next lemma establishes another simple property of the matrices in $\cP_r.$

\begin{lemma}\label{lemma:5} For $t\ge 2$, Let $A_1,A_2, \dots, A_t$ and $B$ be matrices in $\cP_r$ and suppose that $s(B)$ is odd.
Let $D = (A_1|A_2|\dots|A_t)$ and $D' = (A_1B|A_2B|\dots|A_tB)$, then $\mathrm{RowSpan}(D) = \mathrm{RowSpan}(D')$. 
\end{lemma}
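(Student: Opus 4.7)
The plan is to exploit the commutativity of the ring $\cP_r$ established in the preceding proposition, together with Lemma~\ref{claim:4}(\ref{claim:4_1}), which gives $B^2=I$ since $s(B)$ is odd. The goal is to show that every row of $D'$ lies in the row span of $D$ and conversely, by a short algebraic manipulation.

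First I would write the $u$-th row of $D'$ as $e_u(A_1B\mid A_2B\mid\cdots\mid A_tB)$, where $e_u$ denotes the row indicator of $u\in F^r$. Because $A_i,B\in\cP_r$ and $\cP_r$ is commutative, $A_iB=BA_i$ for each $i$, so this row equals
\[
(e_u B A_1,\,e_u B A_2,\,\ldots,\,e_u B A_t) \;=\; (e_u B)\,D.
\]
Thus each row of $D'$ is a linear combination of the rows of $D$ with coefficients given by the entries of $e_u B$, yielding $\mathrm{RowSpan}(D')\subseteq\mathrm{RowSpan}(D)$.

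For the reverse inclusion, I would insert the identity $B^2=I$ and repeat the same manipulation: the $u$-th row of $D$ equals $(e_u A_1 B^2,\ldots,e_u A_t B^2)=(e_u B)(A_1 B\mid\cdots\mid A_t B)=(e_u B)\,D'$, which is a linear combination of the rows of $D'$. This gives $\mathrm{RowSpan}(D)\subseteq\mathrm{RowSpan}(D')$, and combining the two inclusions proves the claim.

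There is no real obstacle here: the content is simply the observation that right-multiplication of $D$ by the block-diagonal matrix with blocks $B,\ldots,B$ can be transferred to left-multiplication by $B$, because $B$ commutes with every $A_i$ inside $\cP_r$. The hypothesis that $s(B)$ is odd enters in exactly one place—to guarantee $B$ is its own inverse—so that passing back and forth between $D$ and $D'$ is a bijection of row spaces rather than a one-sided containment.
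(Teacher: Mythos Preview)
Your proof is correct and follows essentially the same approach as the paper: both arguments use commutativity of $\cP_r$ to rewrite $A_iB=BA_i$ and the identity $B^2=I$ (from Lemma~\ref{claim:4}\eqref{claim:4_1}) to obtain the two inclusions. The only cosmetic difference is that the paper packages your row-by-row computation into the matrix identities $D'=BD$ and $D=BD'$.
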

\begin{proof}
Observe that
	\begin{align*}
		D' &= (A_1B|A_2B|\dots|A_tB) = (BA_1|BA_2|\dots|BA_t) = BD,\\
        D&=(A_1|A_2|\dots|A_t)=(BA_1B|BA_2B|\dots|BA_tB) = BD'.
	\end{align*}
The first of these relations implies that $\mathrm{RowSpan}(D')\subseteq \mathrm{RowSpan}(D)$, and the second implies the reverse inclusion.
	\end{proof}

\begin{lemma}
The action of $\cM_r$ on the set $\cP_r$ partitions this set into equivalence classes.
\end{lemma}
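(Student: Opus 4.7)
The plan is to verify that the relation on $\cP_r$ defined by $A \sim B$ if and only if $B = \Gamma A$ for some $\Gamma \in \cM_r$ is an equivalence relation; equivalently, I will show that left multiplication by $\cM_r$ is a well-defined group action on $\cP_r$, whose orbits then partition $\cP_r$ in the standard way.

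First I would check that the action is well-defined on $\cP_r$: since every $\Gamma \in \cM_r$ lies in $\cP_r$ (as a one-term sum of permutation matrices), and since $\cP_r$ is closed under multiplication by the ring proposition preceding Lemma \ref{claim:4}, any product $\Gamma A$ with $\Gamma \in \cM_r$ and $A \in \cP_r$ remains in $\cP_r$.

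Next I would verify the three equivalence properties, all of which reduce to Lemma \ref{lemma:p2s}. Reflexivity follows from $\Gamma_0 = I$, so that $A = \Gamma_0 A$. Symmetry uses the involutivity $\Gamma_v \Gamma_v = \Gamma_{v+v} = \Gamma_0 = I$: if $B = \Gamma_v A$, then multiplying both sides by $\Gamma_v$ yields $A = \Gamma_v B$. Transitivity uses $\Gamma_w \Gamma_v = \Gamma_{v+w} \in \cM_r$: if $B = \Gamma_v A$ and $C = \Gamma_w B$, then $C = \Gamma_{v+w} A$, so $A \sim C$.

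I do not foresee a substantive obstacle; the content of the statement is simply that $(\cM_r)_\times \cong (F^r)_+$ is a group acting on $\cP_r$ by multiplication (indeed by left or right multiplication alike, since $\cP_r$ is commutative), and any group action partitions its underlying set into orbits.
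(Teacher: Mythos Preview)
Your proposal is correct and follows essentially the same approach as the paper: both reduce the equivalence-relation properties to Lemma~\ref{lemma:p2s} (i.e., $\Gamma_u\Gamma_v=\Gamma_{u+v}$). The paper's proof is terser, explicitly verifying only transitivity and leaving reflexivity and symmetry implicit in the group structure of $\cM_r$, whereas you spell out all three properties and the closure of $\cP_r$ under the action.
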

\begin{proof} Transitivity follows by the remark before Lemma \ref{lemma:p2s} (or from the lemma itself): if $A_1=\Gamma_u A_2$ and $A_2=\Gamma_v A_3$ then $A_1=\Gamma_w A_3$ for $w=u+v$.
\end{proof}

Denote by $[A]$ the equivalence class of $A\in \cP_r$ and note that, since the action of $\cM_r$ is faithful, $|[A]|=|\cM_r|=2^r$.

\begin{lemma}\label{lemma:7}
Let $D_i,i=1,\dots,t$ and $A$ be matrices from $\cP_r$ and let $D=(D_1|D_2|\dots|D_{t})$. If $D_i \in [A],i=1,\dots,t$ then 
$\rank(A) = \rank(D)$.
\end{lemma}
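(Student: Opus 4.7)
The plan is to establish both inequalities $\rank(D)\ge \rank(A)$ and $\rank(D)\le \rank(A)$ using the fact that membership $D_i\in[A]$ is very rigid: it means $D_i=\Gamma_{v_i}A$ for some $\Gamma_{v_i}\in\cM_r$, and by commutativity of $\cP_r$ we may equivalently write $D_i=A\Gamma_{v_i}$. This commutativity (already recorded via the ring structure of $\cP_r$) is what allows everything else to work cleanly.

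For the lower bound, I would simply note that $D_1$ is a column-submatrix of $D$, so $\rank(D)\ge\rank(D_1)$. By Lemma~\ref{claim:4}(\ref{claim:4_5}) applied to $D_1=A\Gamma_{v_1}$ we have $\rank(D_1)=\rank(A)$, giving $\rank(D)\ge\rank(A)$.

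For the upper bound, I would factor the wide matrix as
\[
D=(A\Gamma_{v_1}\mid A\Gamma_{v_2}\mid\dots\mid A\Gamma_{v_t})
  =A\cdot(\Gamma_{v_1}\mid\Gamma_{v_2}\mid\dots\mid\Gamma_{v_t}).
\]
The right-hand side is a matrix product in which $A$ is $2^r\times 2^r$ and the block matrix is $2^r\times t2^r$. Since $\rank(XY)\le\rank(X)$ in general, this immediately gives $\rank(D)\le\rank(A)$, which closes the argument.

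There is no real obstacle here; the only point requiring any care is the direction of the action of $\cM_r$ on $\cP_r$ (left versus right), but both are the same by commutativity of $\cP_r$, so the factorization $D=A\cdot(\Gamma_{v_1}\mid\cdots\mid\Gamma_{v_t})$ is well-defined. This lemma is essentially the natural statement that stacking translates of a single matrix $A$ horizontally does not increase its rank, which will be the key observation used later to reduce the rank computation of the full augmented adjacency matrix to that of a single block.
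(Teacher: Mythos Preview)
Your proof is correct. Both inequalities are established cleanly: the lower bound via the column-submatrix $D_1$ and Lemma~\ref{claim:4}(\ref{claim:4_5}), and the upper bound via the factorization $D=A\cdot(\Gamma_{v_1}\mid\cdots\mid\Gamma_{v_t})$ together with the standard rank inequality $\rank(XY)\le\rank(X)$.

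The paper's proof takes a slightly different route. It argues (for $t=2$, leaving the general case implicit) that $A\Gamma_u$ and $A\Gamma_v$ have the same set of columns up to permutation, so $\rank(A\Gamma_u\mid A\Gamma_v)=\rank(A\Gamma_u\mid A\Gamma_u)$; then it uses commutativity to rewrite this as $\rank(\Gamma_u(A\mid A))=\rank(A)$. Your factorization $D=A\cdot(\Gamma_{v_1}\mid\cdots\mid\Gamma_{v_t})$ bypasses the ``same column set'' observation entirely and handles arbitrary $t$ in one line, so it is a bit more direct. The paper's argument, on the other hand, makes the combinatorial reason (all blocks are column-permutations of one another) explicit, which is perhaps more in the spirit of the surrounding development. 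Either way, the essential content---that the row space of $D$ is exactly the row space of $A$---is the same.
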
 

\begin{proof} For simplicity, let $t=2$ and let $D_1=A\Gamma_u$ and $D_2=A\Gamma_v$ for some $u,v\in F^r$.
Clearly $\rank(A\Gamma_u|A\Gamma_v)=\rank(A\Gamma_u|A\Gamma_u)$ since $A\Gamma_u$ and $A\Gamma_v$ share the same column set.
Next, we have 
	\begin{align*}
	\rank(D)&= \rank(D_1|D_2) =\rank(A\Gamma_u|A\Gamma_v) \\
	&=\rank(A\Gamma_u|A\Gamma_u) =\rank(\Gamma_uA|\Gamma_uA)=\rank(\Gamma_u(A|A))=\rank(A).\qedhere
	\end{align*}
\end{proof}

\section{Partitioning the adjacency matrix}

Suppose we are given a small code $C\subset F^n$, and let $S=\{h_1,\dots,h_{n}\}$ be the set of columns of its parity-check matrix. We begin with an obvious claim, stated in the next lemma.
\begin{lemma}\label{lem8}
    Let $A$ be the adjacency matrix of the graph $G=\Cay(F^r,S)$, where $S$ may or may not include $0$, then
	$$
A = \Gamma_{h_1}+\Gamma_{h_2}+\dots+\Gamma_{h_{n}}.
	$$
\end{lemma}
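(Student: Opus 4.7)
The plan is to verify the claimed identity entrywise, using only the definitions of the Cayley graph adjacency matrix and of the permutation matrices $\Gamma_v$ introduced in Section~4. Fix arbitrary $u,w\in F^r$. On one hand, by the definition of $\Cay(F^r,S)$, the entry $A_{u,w}$ equals $1$ if and only if there exists $h_i\in S$ with $u = w + h_i$, equivalently $u+w \in S$ (and if $0\in S$ this also puts $1$'s on the diagonal, consistent with the remark in Section~4 that we then write $A$ rather than $\tilde A$).

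On the other hand, recall from Section~3 that $(\Gamma_v)_{u,w} = \mathbbm{1}(v = u+w)$, so
\[
\Bigl(\sum_{i=1}^{n}\Gamma_{h_i}\Bigr)_{u,w} \;=\; \sum_{i=1}^{n}\mathbbm{1}(h_i = u+w) \;=\; \bigl|\{\,i : h_i = u+w\,\}\bigr| \pmod{2}.
\]
Therefore the two matrices agree at $(u,w)$ provided $u+w$ appears in the multiset $\{h_1,\ldots,h_n\}$ at most once (and the entry is then $1$ exactly when $u+w\in S$).

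To close the argument I would invoke the standing assumption from the beginning of Section~\ref{sec:4} that the small code $C$ has minimum distance at least $4$; as noted there, this already forces the columns $h_1,\dots,h_n$ of the parity-check matrix $H$ to be pairwise distinct. Hence $|\{i : h_i = u+w\}|\in\{0,1\}$ for every $u,w$, and the two expressions coincide at every entry.

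Since the verification reduces to matching indicator functions, there is really no obstacle here: the only subtlety worth flagging is making explicit the distinctness of the columns of $H$ (so that the sum over $i$ does not produce cancellations mod $2$) and recording that the diagonal behavior is handled uniformly by whether or not $0\in S$, matching the $A$ versus $\tilde A$ convention of Section~\ref{sec:4}.
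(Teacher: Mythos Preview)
Your proof is correct; the paper itself states this lemma as an ``obvious claim'' and offers no proof, so your entrywise verification via the definitions of $A$ and $\Gamma_v$ is exactly the natural unpacking of that claim. The one point you rightly make explicit---distinctness of the $h_i$ so that no cancellation mod~$2$ occurs---is indeed guaranteed by the standing assumption that the small code has minimum distance at least~$4$.
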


Our aim at this point is to operate on submatrices of $A$, transforming it to a more convenient form while preserving the rank. With this
in mind, we will define permutations derived from the matrix $H$ but acting on subspaces of $F^r$.

For integers $s,t, 1\le s\le t\le r$ and a vector $x=(x_1,\dots,x_{r})\in F^r$ let $x^{(s,t)}=(x_s,x_{s+1},\dots,x_t)$.
Given $\ell\le r$ and a vector $u\in F^\ell$, let $S_u$ be the set
of suffixes of the columns whose prefix is $u$:
   \begin{equation}\label{eq:B}
     S_{u}=\{h^{(\ell+1,r)}\mid h\in S, h^{(1,\ell)}=u\}.
   \end{equation}
Finally, consider permutations $\Gamma_v,v\in F^{r-\ell}$ acting on $F^{r-\ell}$ by adding vectors from $S_u$. For a given $u$ define the matrix
     \begin{equation}\label{eq:C}
D_u  = \sum_{{v} \in S_u} \Gamma_{v}.
     \end{equation}
Assume by definition that if $S_u = \emptyset$ then $D_u=0_{2^{r-\ell}\times 2^{r-\ell}}$.

To give an example, let $r=3,\ell=1$, and take 
 $$H=\begin{bmatrix}0&1&0&1\\0&0&1&1\\0&0&0&1\end{bmatrix},$$ 
 then $S_0=\{00,10\},S_1=\{00,11\}$ and
    $$
    D_0=I+\begin{bmatrix}0&0&1&0\\0&0&0&1\\1&0&0&0\\0&1&0&0\end{bmatrix}, \; 
    D_1=I+\begin{bmatrix}0&0&0&1\\0&0&1&0\\0&1&0&0\\1&0&0&0\end{bmatrix},
    $$
    where we assumed that the rows and columns of the matrices are indexed by $F^2$ in lexicographical order.
    
Now let us look at how the matrices defined in \eqref{eq:C} relate to the adjacency matrix of the graph $G$.
For $\ell\in\{1,2,\dots,r\},$ let $F^\ell=\{{v}_0,{v}_1,\dots,{v}_{2^\ell-1}\}$ where the vectors are ordered lexicographically.
Starting with a small code $C$ with a fixed parity-check matrix $H$, let us form the matrices $D_{{v}_0},D_{{v}_1},\dots,D_{{v}_{2^\ell-1}}$.

\begin{lemma}\label{lem:9}
For any $0 \leq \ell \leq r$ the matrix $A$ can be written in the form
$$
A=\begin{array}{ c *{4}{@{\hspace*{0.3in}}{c}} }
 &\hspace*{.2in}v_0 &\hspace*{.3in}v_1&\hspace*{0.1in}\dots&\hspace*{-.2in}v_{2^\ell-1}\\[.05in]
v_0&\multicolumn{4}{c}{\multirow{4}{*}
{\hspace*{-.3in}$\left(\begin{array}{cccc}
D_{v_0+v_0}&D_{v_0+v_1}&\dots&D_{v_0+v_{2^\ell-1}}\\[.05in]
D_{v_1+v_0}&D_{v_1+v_1}&\dots&D_{v_1+v_{2^\ell-1}}\\[.05in]
\vdots&\vdots&\cdots&\vdots\\[.05in]
D_{v_{2^\ell-1}+v_0}&D_{v_{2^\ell-1}+v_1}&\dots&D_{v_{2^\ell-1}+v_{2^\ell-1}}
    \end{array}\right)$}
    }\\[.05in]
v_1\\[.05in]
\vdots\\[.05in]
v_{2^\ell-1}
\end{array},
$$

\vspace*{.1in}\noindent where the $2^{r-\ell}\times2^{r-\ell}$ blocks $D_{v_i+v_j}$ are defined in \eqref{eq:C}.
\end{lemma}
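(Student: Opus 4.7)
The plan is essentially a bookkeeping computation once the rows and columns of $A$ are grouped correctly. First I would write every $x\in F^r$ as $x=(u,z)$ with $u=x^{(1,\ell)}\in F^\ell$ and $z=x^{(\ell+1,r)}\in F^{r-\ell}$, and arrange the rows and columns of $A$ so that all vectors with prefix $v_0$ come first, then those with prefix $v_1$, and so on up to $v_{2^\ell-1}$. Then $A$ splits into $2^\ell\times 2^\ell$ blocks of size $2^{r-\ell}\times 2^{r-\ell}$; denote by $A_{i,j}$ the block with row prefix $v_i$ and column prefix $v_j$, and label rows and columns inside each block by $z\in F^{r-\ell}$ lexicographically, matching the convention used to define $\Gamma_v$ on $F^{r-\ell}$.

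Next I would compute the $(z,z')$-entry of $A_{i,j}$ directly. Two vertices $(v_i,z)$ and $(v_j,z')$ are adjacent in $G=\Cay(F^r,S)$ iff there is $h\in S$ with $h=(v_i+v_j,\, z+z')$; by the definition \eqref{eq:B} of $S_u$ this is equivalent to $z+z'\in S_{v_i+v_j}$. Hence
\begin{equation*}
(A_{i,j})_{z,z'}=\mathbbm{1}\bigl(z+z'\in S_{v_i+v_j}\bigr)=\sum_{v\in S_{v_i+v_j}}\mathbbm{1}(v=z+z')=\sum_{v\in S_{v_i+v_j}}(\Gamma_v)_{z,z'},
\end{equation*}
which, combined with \eqref{eq:C}, gives $A_{i,j}=D_{v_i+v_j}$ as required.

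The only real pitfall is conflating the prefix of $h$ with the row or column index individually, so I would carefully verify that the relevant prefix is $v_i+v_j$, not $v_i$ or $v_j$ separately. Granted this, the degenerate cases $\ell=0$ (a single block equal to $A$, recovering Lemma \ref{lem8}) and $\ell=r$ (each block is a $1\times 1$ indicator of whether $v_i+v_j\in S$) follow immediately. I do not expect any genuine obstacle beyond this indexing.
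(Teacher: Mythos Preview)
Your proposal is correct and follows essentially the same argument as the paper: split each vector into its $\ell$-prefix and $(r-\ell)$-suffix, identify the block indexed by prefixes $(v_i,v_j)$, and observe that its $(z,z')$-entry is $\mathbbm{1}(z+z'\in S_{v_i+v_j})$, which is exactly the $(z,z')$-entry of $D_{v_i+v_j}$. Your write-up is in fact slightly more explicit about the indexing and the degenerate cases than the paper's version.
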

 
  \begin{proof}
Given $x,y\in F^r$, we have $A_{x,y}=\sum_{h \in S} {\mathbbm 1}(x+h=y)$. 
Let ${v_i} = x^{(1,\ell)}$ and ${v_j} = y^{(1,\ell)}$ be the $\ell$-prefixes of $x$ and $y$ and let $D_{v_i,v_j}$ be the
block in $A$ at the intersection of the stripes $v_i$ and $v_j$. Our goal is to show that $D_{v_i,v_j}=D_{v_i+v_j}$.

	Let $S_{{v}_i + {v}_j}$ be the set defined in~\eqref{eq:B}. 
Given $t,u\in F^{r-\ell}$, the element $$(D_{v_i,v_j})_{t,u}={\mathbbm 1}(t+h'=u),$$ where $h'\in S_{v_i+v_j}$ is an $(\ell-r)$ tail vector.
Rephrasing and using \eqref{eq:C},
	$$
	D_{v_i,v_j}  = \sum_{{h'} \in S_{{v}_i + {v}_j}} \Gamma_{{h}'} = D_{{v}_i + {v}_j}.
	$$
\end{proof}

\subsection{Zero-codeword-only codes and repetition codes}
As an example of using the above approach, consider the adjacency matrices of coset graphs of the zero-codeword-only codes and repetition codes. 
Their ranks are known \cite{Couvreur2013,BargZemor2022}, but we rederive them using the tools developed in the previous sections. 
Let $J=\Gamma_{1_r}$ be the matrix defined as $J_{u,v}={\mathbbm 1}(u+v={1_r})$, where ${1_r}$ denotes the all-ones vector of length $r$. Since we assumed that the vectors in $F^r$ are ordered lexicographically, $J$ is antidiagonal.  

\begin{proposition} \cite[Prop.9]{Couvreur2013} Let $C=\{0_r\}$ be the zero-codeword code with $S$ given by the standard basis $e_i,i=1,\dots,r$.
The adjacency matrix $A_r$ of the coset graph $\Cay(F^r,S)$ has rank $2^r$ if $r$ is even and $2^{r-1}$ if $r$ is odd.
\end{proposition}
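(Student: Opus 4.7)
The plan is to combine a direct computation of $A_r^2$ modulo $2$, which sorts out the two parity cases, with the block decomposition from Lemma~\ref{lem:9} applied at $\ell=1$, which pins down the rank when $A_r$ is singular. No spectral or combinatorial calculation beyond what is already developed in Section~3 should be necessary.

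First, I would square $A_r$ directly. By Lemma~\ref{lem8}, $A_r$ is a sum of commuting involutions $\Gamma_{e_i}$ (together with an $I$ term if one adjoins $0_r$ to $S$ per the convention introduced before Section~3). In the expansion of $A_r^2$, the cross-products $\Gamma_{e_i}\Gamma_{e_j}$ with $i\neq j$ appear in pairs and cancel modulo~$2$, while each $\Gamma_{e_i}^2$ equals $I$. Thus $A_r^2 = c\,I$, where $c$ is the number of summands reduced modulo~$2$ and depends only on the parity of $r$. For one parity this gives $A_r^2 = I$, and Lemma~\ref{claim:4}(1) immediately yields $\rank(A_r)=2^r$. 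For the other parity, $A_r^2=0$, so $\mathrm{Im}(A_r)\subseteq\ker(A_r)$, which forces the upper bound $\rank(A_r)\le 2^{r-1}$.

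To establish the matching lower bound in the singular case, I would invoke Lemma~\ref{lem:9} with $\ell=1$. Among the generators only $e_1$ has first coordinate $1$, so $S_1=\{0_{r-1}\}$ and hence $D_1=I$; the set $S_0$ collects the suffixes of the remaining generators, giving $D_0=A_{r-1}$. This produces the compact recursion
$$
A_r = \begin{pmatrix} A_{r-1} & I \\ I & A_{r-1} \end{pmatrix}.
$$
The lower block-row $(I \mid A_{r-1})$ already has rank $2^{r-1}$ because its left block is the identity, so $\rank(A_r)\ge 2^{r-1}$. Combining this with the upper bound closes the singular case and finishes the proof.

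The only place where I expect to need care is the parity bookkeeping: whether or not $0_r$ is adjoined to $S$ shifts the count of summands in $A_r$ by one and flips which parity of $r$ makes $A_r$ invertible. Once the convention is fixed in line with the statement of the proposition, the two cases of the $A_r^2$ computation map cleanly onto the two cases of the claim, and the block recursion supplies the remaining tight bound without further work.
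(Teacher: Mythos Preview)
Your proposal is correct and follows essentially the same path as the paper: invoke Lemma~\ref{claim:4}(1) (equivalently, the $A_r^2$ computation) for the invertible parity, and use the $\ell=1$ block decomposition of Lemma~\ref{lem:9} to handle the singular parity. The only cosmetic difference is that the paper pins down the rank in the singular case by applying Lemma~\ref{lemma:5} to make the two block-rows identical, whereas you split it into the upper bound $A_r^2=0$ and the lower bound from the $(I\mid A_{r-1})$ row; both finish the job immediately, and your caveat about the parity bookkeeping is indeed the one place to be careful (the paper's own proof takes $S=\{e_1,\dots,e_r\}$ without $0_r$, so the full-rank case is $r$ odd).
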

\begin{proof} If $r$ is odd, then $A_r=\sum_{i=1}^r \Gamma_{e_i}$ has full rank by Lemma~\ref{claim:4}(1). Let us consider the case of $r$ even, writing the matrix $C$ as in Lemma~\ref{lem:9}. Let $\ell=1$ and note that $D_0=A_{r-1}$ and $D_1=I_{r-1}$, both of rank $r-1$.
Then
  $$
  A_r=\left(\!\begin{array}{c@{\hspace*{.05in}}c} A_{r-1} &I\\[.05in]I&A_{r-1}\end{array}\!\right).
  $$
Since $s (A_{r-1})$ is odd, Lemma~\ref{lemma:5} implies that we can multiply the upper stripe by $A_{r-1}$ block-by-block without
affecting the row space in this part, thus with no effect on the rank. Upon multiplying, we obtain
  $$
  \left(\!\begin{array}{c@{\hspace*{.05in}}c} I& A_{r-1}\\[.05in]I&A_{r-1}\end{array}\!\right),
  $$
which obviously is of rank $2^{r-1}$.  
\end{proof}
We could of course simply eliminate all the entries in one of the submatrices $A_{r-1}$ by row operations, but the above procedure models 
our approach in other constructions. We next exemplify it in a more complicated case of repetition codes.
Let $C'$ be the repetition code of length $r+1$ and redundancy $r$ defined by the parity-check matrix $H'= [I|1_r]$, 
where $I$ is the identity matrix of order $r$. Form the matrix $H_r=[H'|0_r]$ and consider the code
$C$ of length $r+2$ for which $H_r$ is a parity-check matrix. In the next lemma, we compute the rank of the adjacency matrix of the coset
graph of $C$. 
	\begin{proposition} Let $S=\{e_1,\dots,e_{r},1_r,0_r\}$.
		The adjacency matrix $A_r$ of the coset graph $\Cay(F^r,S)$ satisfies
	  $$
\rank(A_r) =
		\begin{cases*}
		2^r & if $r$ is odd\\
		 \frac{1}{2}(2^{r} - 2^{\frac{r}{2}}) & if $r$ is even.
		\end{cases*}
	  $$
	\end{proposition}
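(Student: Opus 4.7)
The plan is to split on the parity of $r$. When $r$ is odd, $|S|=r+2$ is odd, so Lemma~\ref{claim:4}(1) gives $A_r^2=I$ and hence $\rank(A_r)=2^r$ immediately.

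For $r$ even, I would mirror the approach of the preceding proposition and apply Lemma~\ref{lem:9} with $\ell=1$. Sorting the columns of $H_r=[e_1,\dots,e_r,1_r,0_r]$ by leading bit yields
\[
A_r=\begin{pmatrix} D_0 & D_1 \\ D_1 & D_0 \end{pmatrix},
\]
where $D_0=I+\sum_{i=1}^{r-1}\Gamma_{e_i}$ (suffixes of the columns $e_2,\dots,e_r,0_r$) and $D_1=I+J_{r-1}$ (suffixes of $e_1$ and $1_r$). Since $r-1$ is odd, Lemma~\ref{claim:4}(1) applied to $B_{r-1}=\sum_{i=1}^{r-1}\Gamma_{e_i}$ and to $J_{r-1}$ gives $B_{r-1}^2=J_{r-1}^2=I$, whence $D_0^2=(I+B_{r-1})^2=0$ and $D_1^2=(I+J_{r-1})^2=0$; both blocks are nilpotent of index $2$, mirroring the global identity $A_r^2=B_r^2=0$. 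The operations $R_2\to R_2+R_1$ and $C_1\to C_1+C_2$ (which preserve rank) reduce $A_r$ to the block upper-triangular form $\bigl(\begin{smallmatrix} M & D_1 \\ 0 & M \end{smallmatrix}\bigr)$ with $M=D_0+D_1=B_{r-1}+J_{r-1}\in\cP_{r-1}$. A standard computation then gives $\rank(A_r)=2\rank(M)+\delta$, where $\delta$ is the dimension of the image of the map $\ker(M)\to \ff_2^{2^{r-1}}/\mathrm{im}(M)$, $v\mapsto D_1 v\bmod \mathrm{im}(M)$.

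The target $\rank(A_r)=\tfrac12(2^r-2^{r/2})$ amounts to showing $\rank(M)=2^{r-2}-2^{r/2-1}$ and $\delta=2^{r/2-1}$. The rank of $M$ is attacked inductively: $M$ is itself a sum of permutation matrices in $\cP_{r-1}$, and a further application of Lemma~\ref{lem:9} to $M$ reduces its rank computation to the analogous problem at level $r-2$. The subtler quantity is $\delta$, which measures a ``middle'' subspace of $\ker(M)$ of size $2^{r/2-1}$ that escapes $\mathrm{im}(M)$ under the action of $D_1$; its appearance reflects the $2^{r/2}$-fold deficiency of $\rank(A_r)$ from the trivial upper bound $2^{r-1}$ forced by $A_r^2=0$. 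Identifying this subspace explicitly, either by iterating the block decomposition or by exploiting the tensor identification $F^r\cong F^{r/2}\oplus F^{r/2}$ (which factors $A_r=I\otimes I+J\otimes J+B\otimes I+I\otimes B$ over $\ff_2[F^{r/2}]\otimes\ff_2[F^{r/2}]$), is the main obstacle I expect. Once both quantities are established, combining gives
\[
\rank(A_r)=2\bigl(2^{r-2}-2^{r/2-1}\bigr)+2^{r/2-1}=2^{r-1}-2^{r/2-1}=\tfrac12(2^r-2^{r/2}),
\]
completing the even case.
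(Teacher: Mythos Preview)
Your odd case is identical to the paper's. The even case, however, has a genuine gap, and the gap stems from the choice $\ell=1$.

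With $\ell=1$ both blocks $D_0=I+B_{r-1}$ and $D_1=I+J_{r-1}$ have even $s$, so Lemma~\ref{lemma:5} is unavailable and you cannot perform the block-row multiplications that drive the method. You correctly reach $\bigl(\begin{smallmatrix} M & D_1\\ 0 & M\end{smallmatrix}\bigr)$ with $M=B_{r-1}+J_{r-1}$, but then you are left with two new unknowns, $\rank(M)$ and $\delta$, neither of which you actually compute. The claim that ``a further application of Lemma~\ref{lem:9} to $M$ reduces its rank computation to the analogous problem at level $r-2$'' is not justified: $M$ is the adjacency matrix of $\Cay(F^{r-1},\{e_1,\dots,e_{r-1},1_{r-1}\})$, which is a \emph{different} family (it is missing the generator $0$), so there is no recursion back to $A_{r-2}$. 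For $\delta$ you explicitly flag the difficulty and leave it open. The values you assign to $\rank(M)$ and $\delta$ are reverse-engineered from the target, not proved; the tensor remark is a plausible heuristic but not an argument.

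The paper avoids all of this by taking $\ell=2$. The four blocks are then $D_{(0,0)}=A_{r-2}+J$, $D_{(0,1)}=D_{(1,0)}=I$, $D_{(1,1)}=J$, and crucially $s(A_{r-2}+J)=r-1$ and $s(J)=1$ are both odd. This lets one invoke Lemma~\ref{lemma:5} to multiply the top block-row by $A_{r-2}+J$ and the bottom by $J$; after elementary row eliminations the result is block-triangular with rank visibly equal to $2^{r-2}+2\,\rank(A_{r-2})$. That is a clean two-step recursion on even $r$, with base case $r=2$. The point you missed is that stepping down by two keeps you in the even case \emph{and} produces blocks of odd $s$, which is exactly what makes the machinery of Section~3 applicable.
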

\begin{proof}
If $r$ is odd then $s(A_r)$ is odd, and by Lemma~\ref{claim:4}\eqref{claim:4_1} $A_r$ is a full-rank matrix, i.e., $\rank(A_r)=2^r$. 
For even $r$ we prove the result by induction on $r$. Take $r=2$, then $H_2=\Big[
\text{\small$\begin{array}{*{4}{@{\hspace*{0.05in}}c}}1&0&1&0\\0&1&1&0\end{array}$}\!\Big]$. Since $S=F^2$, the matrix $A_2$ is an all-ones matrix of rank 1, verifying the base case.

Let us assume that $\rank(A_{r-2})=\frac{1}{2}(2^{r-2} - 2^{\frac{r-2}{2}})$ and let us consider the matrix $A_r$. 
We decompose it into blocks taking $\ell=2$ in the construction of Lemma \ref{lem:9}. Observe that
   $$D_{(0,0)} = A_{r-2} + J, \;D_{(0,1)} = I, \;D_{(1,0)} = I \text{ and } D_{(1,1)} = J.$$
By Lemma~\ref{lem:9} we obtain 
   $$
		A_r= \left( \begin{array}{*{4}{@{\hspace*{-.05in}}c}}
		\mbox{  $A_{r-2} + J$} & \mbox{  $I$} & \mbox{  $I$} & \mbox{  $J$}\\ 
		\mbox{  $I$} & \mbox{  $A_{r-2} + J$} & \mbox{  $J$} & \mbox{  $I$}\\ 
		\mbox{  $I$} & \mbox{  $J$} & \mbox{  $A_{r-2} + J$}& \mbox{  $I$}\\ 
		\mbox{  $J$} & \mbox{  $I$} &\mbox{  $I$} & \mbox{  $A_{r-2} + J$}	
		\end{array} \right).
   $$
Next multiply the top horizontal stripe by $A_{r-2}+J$ and the bottom one by $J$, then we obtain the following matrix:
   \begin{align*}
		 \left( \begin{array}{*{4}{@{\hspace*{0.1in}}c}}
		\mbox{  $I$} & \mbox{  $A_{r-2} + J$}  & \mbox{  $A_{r-2} + J$}  & \mbox{  $A_{r-2}J + I$} \\ 
		\mbox{  $I$} & \mbox{  $A_{r-2} + J$} & \mbox{  $J$} & \mbox{  $I$}\\ 
		\mbox{  $I$} & \mbox{  $J$} & \mbox{  $A_{r-2} + J$}& \mbox{  $I$}\\ 
		\mbox{  $I$} & \mbox{  $J$} &\mbox{  $J$} & \mbox{  $A_{r-2}J + I$}	
		\end{array} \right).
		\end{align*}
Since $s(A_{r-2}+J)$ is odd, by Lemma~\ref{lemma:5} this has no effect on the rank. Next, let us eliminate the bottom stripe adding to it the first three stripes row-by-row, and after that cancel two matrices in each of the two middle stripes. Overall we obtain 
		\begin{align*}
		 \left( \begin{array}{*{4}{@{\hspace*{0.1in}}c}}
		\mbox{  $I$} & \mbox{  $A_{r-2} + J$}  & \mbox{  $A_{r-2} + J$}  & \mbox{  $A_{r-2}J + I$} \\ 
		\mbox{  ${0}$} & \mbox{  ${0}$} & \mbox{  $A_{r-2}$} & \mbox{  $A_{r-2}J$}\\ 
		\mbox{  ${0}$} & \mbox{  $A_{r-2}$} & \mbox{  $A_{r-2}$}& \mbox{  ${0}$}\\ 
		\mbox{  ${0}$} & \mbox{  ${0}$} &\mbox{  ${0}$} & \mbox{  ${0}$}	
		\end{array} \right).
		\end{align*}
Let $D_1 = (A_{r-2}|A_{r-2}J)$ and $D_2 = (A_{r-2}|A_{r-2})$. By Lemma~\ref{lemma:7}, $\rank(D_1)=\rank(D_2)=\rank(A_{r-2})$.
		Therefore, considering the first three stripes independently, we obtain
		\begin{align*}
		\rank(A_{r})
		 & = \frac{1}{4}\cdot 2^r + \frac{1}{2}\cdot(2^{r-2} - 2^{\frac{r-2}{2}}) + \frac{1}{2}\cdot(2^{r-2} - 2^{\frac{r-2}{2}})    \\
				& = \frac{1}{2} \cdot \Big( 2^{r} - 2^{\frac{r}{2}}  \Big).  \qedhere  
		\end{align*}
	\end{proof}

\section{A new code family}\label{sec:result}

The starting point of the construction is the binary Hamming code of length $2^{r}-1,r\ge 2$. Let $\fH_r$ denote its parity-check matrix written
in the standard form in which all the $r$-columns are ordered lexicographically with $0_{r-1}|1$ on the left and $1_r$ on the right. 
Since $\fH_r$ contains all the nonzero columns, its coset graph is a complete graph $K_{2^r}$. 
\begin{lemma}\label{lemma:H0}
Consider the matrix $\left(\fH_r^\intercal | 0_{(2^{r}-1)\times m}\right)^\intercal, m\ge 1$ 
and let  $S$ be the set of its columns. Let $G=\Cay(F^{r+m},S)$, then $\rank(\tilde A(G))=2^m$.
\end{lemma}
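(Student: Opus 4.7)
The plan is to exploit the trivial product structure that padding $\fH_r$ with $m$ zero rows forces on the graph: $G$ decomposes as a disjoint union of cliques, from which the rank of $\tilde A(G)$ is immediate.

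First I would describe the generator set and the graph. The columns of the given matrix are precisely the vectors $(h, 0_m) \in F^{r+m}$ with $h$ ranging over $F^r \setminus \{0_r\}$, so $(x, y)$ and $(x', y')$ in $F^r \times F^m$ are adjacent in $G$ iff $y = y'$ and $x \ne x'$. Recalling that $\fH_r$ is the parity-check matrix of the binary Hamming code of length $2^r - 1$, whose coset graph is the complete graph $K_{2^r}$, this shows that $G$ is the vertex-disjoint union of $2^m$ copies of $K_{2^r}$, indexed by $y \in F^m$. Ordering the vertices so that the $F^m$-coordinate is most significant, $\tilde A(G)$ is block diagonal with $2^m$ diagonal blocks, each equal to $A(K_{2^r}) + I = 1_{2^r \times 2^r}$, the $2^r \times 2^r$ all-ones matrix of rank $1$. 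Hence $\rank(\tilde A(G)) = 2^m$.

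Equivalently, one can stay inside the paper's formalism. Including $0_{r+m}$ in $S$, Lemma~\ref{lem8} gives
$$\tilde A(G) = \sum_{u \in F^r} \Gamma_{(u,\, 0_m)}.$$
Under the lex order on $F^{r+m}$ in which the first $r$ coordinates are most significant, $\Gamma_{(u,\,0_m)}$ factors as the Kronecker product $\Gamma_u \otimes I_{2^m}$, so
$$\tilde A(G) = \Bigl(\sum_{u \in F^r} \Gamma_u\Bigr) \otimes I_{2^m} = 1_{2^r \times 2^r} \otimes I_{2^m},$$
using the identity $\sum_{v \in F^r} \Gamma_v = 1_{2^r \times 2^r}$ recorded just before Lemma~\ref{claim:4}. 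This matrix has rank $2^m$, matching the direct count.

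There is no real obstacle here; the lemma is essentially a bookkeeping statement. Its role is presumably to serve as a base case or building block for the main construction in Section~\ref{sec:result}, where additional nonzero entries in the last $m$ rows of the parity-check matrix will be grafted on so that the $F^m$ coordinates participate nontrivially.
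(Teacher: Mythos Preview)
Your proof is correct and mirrors the paper's own treatment: the paper likewise gives two arguments, the first observing that $G$ splits into $2^m$ copies of $K_{2^r}$ (vertices sharing an $m$-suffix form a clique) so that $\tilde A(G)$ is block diagonal with rank-$1$ blocks, and the second taking $\ell=r$ in Lemma~\ref{lem:9} to exhibit $\tilde A(G)$ as a $2^r\times 2^r$ array of blocks all equal to $I_{2^m}$, which is exactly your $1_{2^r\times 2^r}\otimes I_{2^m}$. The only cosmetic difference is that you package the second argument as a Kronecker product rather than citing Lemma~\ref{lem:9}.
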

\begin{proof}
We give two proofs, of which the second paves the way for later results in this section.

There are $2^r$ vertices $v\in F^{r+m}$ that share a common $m$-suffix. They form a clique $K_{2^r}$, which is a connected component of the graph $G$. Thus $\tilde A(G)$ is a $2^m\times 2^m$ block-diagonal matrix with each block of rank 1, and so its rank is $2^m$.

Alternatively, with Lemma~\ref{lem:9} in mind, let $\ell=r$ and note that for any fixed $r$-prefix $u$ the set $S_u=\{0_m\}$. Thus the matrix $\tilde A(G)$ can be written as a $2^r\times 2^r$ block matrix with each block equal to $I_{2^m}$, confirming again that its rank is $2^m$.
\end{proof}

The approach taken in \cite{BargZemor2022}, as well as in our work, is to add rows and columns to $\fH_r$ in order to remove the codewords of weight 3 in the small code, while keeping the rank $\rank (\tilde A)$ low. The parity-check matrix $H_{s}$ of the small code is formed of ${s}$ rows at the top and some combinations of the matrices $\fH_r$ underneath them. The resulting big codes are denoted by $\cC_{{s},r}$. The construction is recursive, starting with ${s}=2$ (the base case) and adding one extra row to the top part in each step.
 
\subsection{The case of ${s}=2$}
Consider the following $(r+2)\times(2^r+2)$ parity-check matrix of the small code: 
  \begin{equation}\label{eq:BZ}
  H_2=\left(\begin{array}{*{7}{@{\hspace*{0.1in}}{c}}}
     0&0&0&\dots&0&1&1\\
     0&1&1&\dots&1&0&1\\\hline
     0& & &     & &0&0\\[-.05in]
     \vdots&\multicolumn{4}{c}{\fH_{r}}&\vdots&\vdots\\
     0& & &     & &0&0
     \end{array}
     \right).
   \end{equation}
By inspection, any three columns of this matrix not including the first column are linearly independent, so the coset graph of the code $\text{\rm ker}(H_2)$ is triangle-free.
The family of big codes $\cC_{2,r}$ obtained from $H_2$ was originally discovered in \cite{BargZemor2022} (up to a minor difference that is not essential for the results). As shown there, the rate of the storage code is $R(\cC_{2,r})=3/4-2^{-r}$ for all $r\ge 4$. 
While the argument in \cite{BargZemor2022} is somewhat complicated, here we give a straightforward proof based on the decomposition of Lemma \ref{lem:9} with $\ell=2$. \begin{proposition} Let $S$ be the set of columns of the matrix $H_2$ and let $G_2\coloneqq\Cay(F^r,S)$ be the coset graph
of the code $\text{\rm ker}(H_2)$. Then
$\rank(A(G_2))\le 2^{r+2}/4+4$.
\end{proposition}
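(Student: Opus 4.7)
The plan is to apply Lemma~\ref{lem:9} with $\ell=2$ to express $A(G_2)$ as a $4\times 4$ block matrix of $2^r\times 2^r$ blocks, and then reduce it by row-stripe additions to a form whose rank is transparent. The key realization is that the ``Hamming block'' of $H_2$ contributes the sum $\sum_{h\in F^r\setminus\{0_r\}}\Gamma_h=E+I$, where $E$ denotes the $2^r\times 2^r$ all-ones matrix; three simple stripe additions will isolate the rank-one behaviour of $E$ in three of the four stripes, leaving only one ``large'' stripe to bound.

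First I would read off the suffix sets $S_u$, $u\in F^2$, from definition~\eqref{eq:B} applied to $H_2$ in~\eqref{eq:BZ}. Only columns $2,\ldots,2^r$ carry the prefix $01$, and their suffixes exhaust all of $F^r\setminus\{0_r\}$; each of the prefixes $00$, $10$, $11$ appears in a single column, with suffix $0_r$. Consequently $D_{00}=D_{10}=D_{11}=I$, and using the identity $\sum_{v\in F^r}\Gamma_v=E$ recorded just before Lemma~\ref{claim:4}, $D_{01}=E+I$. By Lemma~\ref{lem:9}, writing $K\coloneqq E+I$,
$$A=\begin{pmatrix}I & K & I & I\\ K & I & I & I\\ I & I & I & K\\ I & I & K & I\end{pmatrix}.$$

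Next I would apply the three row-stripe additions $r_1\leftarrow r_1+r_2$, $r_3\leftarrow r_3+r_4$, and $r_2\leftarrow r_2+r_4$, which preserve rank. Since $I+K=E$ in $\ff_2$, the matrix becomes
$$\begin{pmatrix}E & E & 0 & 0\\ E & 0 & E & 0\\ 0 & 0 & E & E\\ I & I & K & I\end{pmatrix}.$$
Each of the top three stripes is built from copies of $E$ (all rows identical) and zero blocks, so has rank exactly $1$; together they contribute at most $3$ to the total rank. The bottom stripe has $2^r$ rows and hence rank at most $2^r$. Therefore $\rank A\le 2^r+3\le 2^{r+2}/4+4$, in fact slightly better than the stated bound. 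The argument has no real obstacle: it rests on the Hamming-code property that $\fH_r$ uses every nonzero column exactly once, which forces $D_{01}$ to be a rank-one perturbation of $I$, after which the stripe additions finish the job.
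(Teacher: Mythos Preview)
Your argument is correct and yields the same block decomposition as the paper (your $K=E+I$ is exactly the paper's $P$, the adjacency matrix of $K_{2^r}$). The only difference is in the final bookkeeping step: the paper splits $A(G_2)$ additively as
\[
A(G_2)=\begin{pmatrix}I&I&I&I\\I&I&I&I\\I&I&I&I\\I&I&I&I\end{pmatrix}
+\begin{pmatrix}0&P&0&0\\P&0&0&0\\0&0&0&P\\0&0&P&0\end{pmatrix},
\]
and uses $\rank(\text{first})=2^r$, $\rank(\text{second})\le 4$; you instead perform three block-row additions and bound the rank stripe by stripe, obtaining $2^r+3$, one better than the paper's $2^r+4$. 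Both extract the same structural fact---that the Hamming block contributes a rank-one matrix---so the insight is identical. The paper's additive viewpoint is what scales cleanly to the general case $s\ge 3$ treated afterwards (writing $A(G_s)$ as a sum of $s$ block matrices), so it is worth internalizing even though your row-reduction works perfectly well here.
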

\begin{proof}
	Let $P$ be the adjacency matrix of $K_{2^{r}}$. We partition the adjacency matrix of $G_2$ into blocks of order $2^{r}$ according
to the 2-prefix of the generator. The result can be written in the form
	\begin{align*}
	\renewcommand\arraystretch{1.2}
	A(G_2)  = \left( \begin{array}{*{4}{@{\hspace*{0in}}c}}
	\mbox{  $I$} & \mbox{  $P+I$} & \mbox{  $ I$} & \mbox{  $I$}\\
	\mbox{  $P+I$} & \mbox{  $I$} & \mbox{  $I$} & \mbox{  $I$}\\ 
	\mbox{  $I$} & \mbox{  $I$} & \mbox{  $I$}& \mbox{  $P+I$}\\ 
	\mbox{  $I$} & \mbox{  $I$} &\mbox{  $P+I$} & \mbox{  $I$}	
	\end{array} \right)=
	       \left(\begin{array}{*{4}{c}}
	       I&I&I&I\\I&I&I&I\\I&I&I&I\\I&I&I&I
	       \end{array}\right)
	+      \left(\begin{array}{*{4}{c}}
	0&P&0&0\\P&0&0&0\\0&0&0&P\\0&0&P&0
	  \end{array}\right),
	\end{align*}
 where the locations of $P$ correspond to $D_{01}$; see \eqref{eq:C} and
 Lemma \ref{lem:9}. Since $\rank(P) = 1$, the result follows.
\end{proof}

\begin{corollary}\label{cor:bz22}
The rate of the storage code is $$R(\cC_{2,r}) \ge 1 - \rank(A)/2^{r+2}=\frac 34-2^{-r}\to \frac 34$$ 
as $r\to\infty$. 
\end{corollary}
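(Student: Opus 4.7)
The plan is immediate: translate the rank bound of the preceding proposition into a rate bound via the identity $R(\cC) = 1 - \rank(A)/N$. No new ingredient beyond the proposition is required; the corollary is essentially a bookkeeping step.

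First I would fix the parameters. Since $H_2$ in (\ref{eq:BZ}) has $r+2$ rows, the Cayley graph $G_2 = \Cay(F^{r+2},S)$ has $N = 2^{r+2}$ vertices. Because the leftmost column of $H_2$ is the zero vector, $0 \in S$, so $A(G_2)$ already carries $1$'s on the diagonal, exactly the convention adopted at the end of Section~\ref{sec:4} under which one writes $A$ rather than $\tilde A$. Consequently, $\cC_{2,r} := \ker A(G_2)$ is a storage code for $G_2$, with dimension $N - \rank(A(G_2))$.

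Next I would invoke the proposition, which gives $\rank(A(G_2)) \le 2^{r+2}/4 + 4$, and substitute this into the rate formula:
\begin{equation*}
R(\cC_{2,r}) \;=\; \frac{N - \rank(A(G_2))}{N} \;\ge\; 1 - \frac{2^{r+2}/4 + 4}{2^{r+2}} \;=\; \frac{3}{4} - 2^{-r}.
\end{equation*}
Letting $r \to \infty$ yields the claimed limiting rate $3/4$. There is no genuine obstacle at this stage; all the substantive work, namely the block decomposition via Lemma~\ref{lem:9} with $\ell=2$ and the observation that the off-diagonal correction is a sum of four rank-one blocks in disjoint stripes, was already carried out in the proposition. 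The only thing worth double-checking during the write-up is that the constant additive term ($+4$) contributes $O(2^{-r-2})$ rather than a fixed loss, which is what guarantees convergence to exactly $3/4$ rather than to a strictly smaller value.
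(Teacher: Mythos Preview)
Your proposal is correct and matches the paper exactly: the corollary is stated there without proof because it is an immediate consequence of the preceding proposition via the identity $R=1-\rank(A)/N$ with $N=2^{r+2}$, and you have filled in precisely those details (including the correct arithmetic $1-(2^{r+2}/4+4)/2^{r+2}=3/4-2^{-r}$). Nothing is missing.
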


Next we increase the dimensions of the matrix We show this procedure in detail for ${s}=3$ and then state and prove the general claim.
 
\subsection{The case of {${s}=3$}} 
Form the matrix $(H_2|H_2)$ and 
  \begin{enumerate}
  \item add a new row $0_{2^{r}+2}1_{2^{r}+2}$ at the top of this matrix.
  \end{enumerate}
We obtain an $(r+3)\times(2^{r+1}+4)$ matrix of the form
    \begin{equation}\label{eq:BZ2}
  \left(\begin{array}{*{6}{@{\hspace*{0.1in}}{c}}c|}
     0&0&0&\dots&0&0&0\\
     0&0&0&\dots&0&1&1\\
     0&1&1&\dots&1&0&1\\\hline
     0& & &     & &0&0\\[-.05in]
     \vdots&\multicolumn{4}{c}{\fH_{r}}&\vdots&\vdots\\
     0& & &     & &0&0
     \end{array}
     \begin{array}{*{7}{@{\hspace*{0.1in}}{c}}}
     1&1&1&\dots&1&1&1\\
     0&0&0&\dots&0&1&1\\
     0&1&1&\dots&1&0&1\\\hline
     0& & &     & &0&0\\[-.05in]
     \vdots&\multicolumn{4}{c}{\fH_{r}}&\vdots&\vdots\\
     0& & &     & &0&0
     \end{array}
     \right).
   \end{equation}
This matrix contains linearly dependent triples that do not include the first column. To take care of them, 
\begin{enumerate}
\setcounter{enumi}{1}
\item add $r$ rows of zeros at the bottom,
\item replicate the column $1|0_{2r+2}$ immediately to the right of the vertical
divider $2^{r}-1$ times,
\item replace the $r\times(2^{r}-1)$ matrix of zeros at the bottom of the new columns with $\fH_{r}$. 
\end{enumerate}
The resulting $(2r+3)\times (3\cdot2^{r}+2)$ matrix $H_3$ is shown below in Eq.\eqref{eq:H3}, with boxes around the rows and columns formed in steps (2)-(4).

\begin{figure}[th]\begin{center}\scalebox{0.33}{\begin{tikzpicture}
\node[minimum size=0.5cm,scale=3] at (5.3,-9) 
{
\begin{minipage}{\textwidth}
\begin{equation}\label{eq:H3}
   H_3=\left(\begin{array}{@{\hspace*{0.1in}}c|*{4}{@{\hspace*{0.1in}}{c}}|@{\hspace*{0.1in}}c@{\hspace*{0.1in}}c
   *{4}{@{\hspace*{0.1in}}{c}}*{4}{@{\hspace*{0.1in}}{c}}|*{2}{@{\hspace*{0.1in}}{c}}}
   0&0&0&\dots&0&0&0&1&1&\dots&1 &1&1&\dots&1&1&1\\
   0&0&0&\dots&0&1&1&0&0&\dots&0 &0&0&\dots&0&1&1\\
   0&1&1&\dots&1&0&1&0&0&\dots&0 &1&1&\dots&1&0&1\\\hline
   0&&&& &0&0 &&&& &&&& &0&0\\ 
   \vdots &\multicolumn{4}{c|@{\hspace*{0.1in}}}{\fH_{r}} &\vdots&\vdots&\multicolumn{4}{c@{\hspace*{0.1in}}}{0}&\multicolumn{4}{c|@{\hspace*{0.1in}}}{\fH_{r}}&\vdots&\vdots\\
   0&&&& &0&0 &&&& &&&& &0&0\\    0&&&& &0&0 &&&& &&&& &0&0\\
   \vdots &\multicolumn{4}{c|@{\hspace*{0.1in}}}{0} &\vdots&\vdots&\multicolumn{4}{c@{\hspace*{0.1in}}}{\fH_{r}}&\multicolumn{4}{c|@{\hspace*{0.1in}}}{0}&\vdots&\vdots\\
   0&&&& &0&0 &&&& &&&& &0&0
   \end{array}\right).
 \end{equation}
   \end{minipage}
   };
   \draw[draw=blue] (4.8,-15.5) rectangle ++(5.8,13);
   \draw[draw=blue] (-6.4,-15.5) rectangle ++(26,4.4);
		\end{tikzpicture}}
 \end{center}
\end{figure}

In the next proposition, we establish simple properties of the matrix  $H_3$. Denote by $S$ the set of its columns.
\begin{proposition} The graph $G_3=\Cay(F^{2r+3},S)$ is connected and triangle-free. 
\end{proposition}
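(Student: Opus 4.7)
The plan is to handle the two claims separately by working directly from the column structure of $H_3$ read off from \eqref{eq:H3}. I will classify the $3\cdot 2^r+2$ columns of $H_3$ by their first three coordinates, obtaining eight types indexed by $F^3$: the zero column (type $000$); three Hamming families of $2^r-1$ columns each, namely the left-block columns $(0,0,1,h_j,0_r)$ (type $001$), the middle-block columns $(1,0,0,0_r,h_k)$ (type $100$), and the right-block columns $(1,0,1,h_l,0_r)$ (type $101$), where $h_j,h_k,h_l$ run over the columns of $\fH_r$; and four ``exceptional'' columns $(0,1,0,0_r,0_r)$, $(0,1,1,0_r,0_r)$, $(1,1,0,0_r,0_r)$, $(1,1,1,0_r,0_r)$ of types $010,011,110,111$, whose lower $2r$ entries vanish. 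All nonzero columns are distinct.

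For triangle-freeness, I would invoke the standard equivalence: a triangle in $\Cay(F^{2r+3},S)$ corresponds to three distinct nonzero elements of $S$ summing to $0$. Projecting onto the first three coordinates, the top-three parts $(a,b,c)$ of such a triple must be distinct nonzero elements of $F^3$ with $a+b+c=0$; there are exactly seven such triples (one per $2$-dimensional subspace of $F^3$). For each of the seven, inspection of the eight types shows that at least one of the two lower Hamming blocks---rows $4,\ldots,r+3$ (housing the tails of the left and right Hamming families) or rows $r+4,\ldots,2r+3$ (housing the tails of the middle family)---receives a contribution from \emph{exactly one} of the three chosen columns. The sum in that block therefore equals a single column of $\fH_r$, which is nonzero by construction, contradicting $a+b+c=0$. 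Triples with a repeated top-three entry force the third column to be the zero column and so are vacuous.

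For connectivity, it suffices to show that $H_3$ has full row rank $2r+3$, since $\Cay(F^{2r+3},S)$ is connected iff its generators span $F^{2r+3}$. Assume $\sum_i\alpha_i R_i=0$ for the rows $R_i$ of $H_3$. Restricting this combination to the middle block of columns, only $R_1$ (the all-ones row there) and $R_{r+4},\ldots,R_{2r+3}$ (which form $\fH_r$ there) contribute. The all-ones vector $\mathbf{1}\in F^{2^r-1}$ does not lie in the row span of $\fH_r$, because for any nonzero $\alpha\in F^r$ the vector $\alpha\fH_r$ has weight $2^{r-1}\neq 2^r-1$. Hence $\alpha_1=0$ and $\alpha_{r+4}=\cdots=\alpha_{2r+3}=0$. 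Applying the same argument to the left-Hamming columns (using $R_3$ as the all-ones row and $R_4,\ldots,R_{r+3}$ as the $\fH_r$ rows) kills those coefficients, after which $\alpha_2=0$ follows because $R_2$ is nonzero.

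The only part demanding care is the triangle-free case analysis, but it is uniform once one notes the ``asymmetric'' placement of the two Hamming tails designed in step (4) of the construction: for every forbidden top-three triple, one of the two lower blocks is touched by a single Hamming family, leaving no opportunity for cancellation and forcing a surviving nonzero column of $\fH_r$ in the sum.
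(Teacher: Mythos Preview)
Your argument is correct. For triangle-freeness both you and the paper do a case analysis on triples of nonzero columns, but you organize it by the top-three projection into $F^3$ and then pinpoint, for each of the seven admissible type-triples, a lower Hamming block touched by exactly one column; the paper instead splits on $b=|\{h_1,h_2,h_3\}\cap S_4|$ (where $S_4$ is your set of four exceptional columns) and for $b=0$ appeals to ``direct inspection'' without isolating the mechanism. Your formulation makes explicit why step~(4) of the construction was designed as it was and would scale more transparently to general $s$. For connectivity the approaches genuinely differ: the paper exhibits a path from $0$ to any $x$ by first adjusting the three top coordinates via $S_4$ and then the two $r$-tails via the Hamming blocks, whereas you prove the equivalent statement that $H_3$ has full row rank $2r+3$ by restricting a putative row dependence to the middle and then the left Hamming blocks and using that $\mathbf 1_{2^r-1}\notin\mathrm{RowSpan}(\fH_r)$. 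Both routes are short; the paper's is slightly more direct, while yours has the virtue of giving an algebraic certificate that the generator set spans $F^{2r+3}$.
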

\begin{proof} Let $S_i,i=1,2,3$ be the sets of columns
that contain the first, second, and third Hamming matrices, respectively, and let
$S_4$ be the remaining set of 4 nonzero columns.
Let $0\in F^{2r+3}$ and $x\in F^{2r+3},x\ne 0$ be two group elements. To prove the first claim, it suffices to show that there is a path in $G_3$ that connects them. 
Suppose $x=(x_1,x_2,x_3,x_4,\dots,x_{2r+3})$. Since $S_4$ contains a basis of $F^3$, any assignment $x_1x_2x_3$ of the first three coordinates in $x$ can be reached independently of the remaining coordinates. Since the Hamming
matrices contain all the nonzero columns, it is also possible to reach any vector of the form $(000,x_4,\dots,x_{2r+3})$.

For the second claim, we need to show that all triples of columns that do not include the 
zero column are linearly independent. This is shown by a straightforward
case study. Let $h_1,h_2,h_3\in S\backslash \{0\}$ be three such columns and let $b=|\{h_1,h_2,h_3\}\cap S_4|$. If $b\in\{1,2,3\}$ then the claim is obvious.
If $b=0$, then $h_1,h_2,h_3$ can be chosen to intersect one, two, or all three
of the Hamming matrices. In each of these cases, direct inspection shows that the
triples cannot add to zero.
\end{proof}

\begin{proposition}  $\rank(A(G_3))\le2^{2r}+\frac 32\cdot2^{r+3}$.
\end{proposition}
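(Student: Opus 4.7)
The plan is to apply Lemma~\ref{lem:9} with $\ell = 3$, writing $A(G_3)$ as an $8 \times 8$ array of $2^{2r}\times 2^{2r}$ blocks with the $(u, v)$ block equal to $D_{u+v}$ for $u, v \in F^3$. Reading off the eight suffix sets $S_w \subseteq F^{2r}$ directly from~\eqref{eq:H3}, I expect the following: the five prefixes $w \in \{000, 010, 011, 110, 111\}$ each come from a single column of $H_3$ with zero suffix, so $D_w = I$; the two prefixes $w \in \{001, 101\}$ pick up the two copies of $\fH_r$ placed in rows $4$ through $r+3$, giving $D_w = B := \sum_{h \in F^r \setminus \{0\}}\Gamma_{(h,\, 0_r)}$; and the prefix $w = 100$ picks up the inserted copy of $\fH_r$ in rows $r+4$ through $2r+3$, giving $D_{100} = C := \sum_{h \in F^r \setminus \{0\}}\Gamma_{(0_r,\, h)}$. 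The key observation is that $B + I = \sum_{h \in F^r}\Gamma_{(h,\, 0_r)}$ is, after reordering rows and columns, block-diagonal with $2^r$ copies of the $2^r \times 2^r$ all-ones matrix, so $\rank(B + I) = 2^r$; by symmetry $\rank(C + I) = 2^r$.

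With the $D_w$'s identified, I would decompose $A(G_3) = E_1 + E_2 + E_3$, where $E_1$ carries the $2^{2r}\times 2^{2r}$ identity $I$ in each of the $64$ block positions, $E_2$ carries $B + I$ in the $16$ positions $(u, v)$ with $u + v \in \{001, 101\}$ and zero elsewhere, and $E_3$ carries $C + I$ in the $8$ positions with $u + v = 100$ and zero elsewhere. Block-by-block inspection, using $I + (B + I) = B$ and $I + (C + I) = C$, verifies $E_1 + E_2 + E_3 = A(G_3)$. Each summand $E_i$ is the Kronecker product of its $8 \times 8$ coefficient pattern with the relevant $2^{2r} \times 2^{2r}$ content, so its rank equals the product of the ranks of the two factors.

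Let $T_w$ denote the $8 \times 8$ permutation matrix with $(T_w)_{u, v} = \mathbbm{1}(u + v = w)$, for $u, v, w \in F^3$. For $E_1$ the pattern is the all-ones $8 \times 8$ matrix and the content is $I$, giving $\rank(E_1) = 2^{2r}$. For $E_3$ the pattern is $T_{100}$ (rank $8$) and the content is $C + I$ (rank $2^r$), so $\rank(E_3) = 8\cdot 2^r$. For $E_2$ the pattern $T_{001} + T_{101} = T_{001}(I + T_{100})$ has the same rank as $I + T_{100}$, and since $T_{100}$ is a product of four disjoint transpositions on $F^3$, the matrix $I + T_{100}$ is (after reordering) block-diagonal with four $2 \times 2$ all-ones blocks, so its rank is $4$, giving $\rank(E_2) = 4\cdot 2^r$. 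Summing,
\[
\rank(A(G_3)) \le \rank(E_1) + \rank(E_2) + \rank(E_3) = 2^{2r} + 4\cdot 2^r + 8\cdot 2^r = 2^{2r} + \tfrac{3}{2}\cdot 2^{r+3},
\]
as required.

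The main obstacle is the first step---correctly extracting the eight $D_w$'s from the multi-stage construction of $H_3$, which requires tracking which copies of $\fH_r$ go into which rows and under which $3$-prefixes. Once the $D_w$'s are in hand, the decomposition $A(G_3) = E_1 + E_2 + E_3$ is essentially forced (it simply records how $A(G_3)$ deviates from the constant all-$I$ pattern through the low-rank corrections $B + I$ and $C + I$), and the rank estimate is a direct computation using the Kronecker structure.
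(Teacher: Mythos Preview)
Your proposal is correct and follows essentially the same approach as the paper. You apply Lemma~\ref{lem:9} with $\ell=3$, correctly identify the eight blocks $D_w$ (your $B+I$ and $C+I$ are precisely the paper's $P_1$ and $P_2$), and perform the identical three-term decomposition $A(G_3)=E_1+E_2+E_3$; the only difference is that you make the Kronecker structure and the rank-$4$ calculation for the $E_2$ pattern explicit, whereas the paper states the contributions $2^{2r}+4\cdot 2^r+8\cdot 2^r$ more tersely.
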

\begin{proof}	
	Let $P_1$ be the adjacency matrix of the Cayley graph in $F^{2r}$ whose generators are the columns of the matrix $(\fH_{r}^\intercal |0_{(2^{r}-1)\times r})^\intercal$ and $0_{2r}$, and let $P_2$ be the same for the matrix $(0_{(2^{r}-1)\times r}|\fH_{r}^\intercal)^\intercal$ and $0_{2r}$. We
 arrange the matrix $A(G_3)$ in block form, where the blocks are indexed by binary 3-vectors (prefixes) ordered lexicographically. Using Lemma~\ref{lem:9} for the matrix $H_3$,
we obtain 
	\begin{align*}
	\renewcommand\arraystretch{2}
	A(G_3) =& \left( \begin{array}{*8{c@{\hspace*{0.1in}}}} 
	\mbox{  $I$} & \mbox{  $P_1+I$} &  \mbox{  $I$}  &  \mbox{  $I$}  &   \mbox{  $P_2+I$}  &  \mbox{  $P_1+I$}  &  \mbox{  $I$}  & \mbox{  $I$} \\ 
	\mbox{  $P_1+I$} & \mbox{  $I$}  &  \mbox{  $I$}  &  \mbox{  $I$}  &   \mbox{  $P_1+I$}  &  \mbox{  $P_2+I$}  &  \mbox{  $I$}  & \mbox{  $I$} \\ 
	\mbox{  $I$} & \mbox{  $I$} &  \mbox{  $I$}  &  \mbox{  $P_1+I$}  &   \mbox{  $I$}  &  \mbox{  $I$}  &   \mbox{  $P_2+I$}  &  \mbox{  $P_1+I$} \\ 
	\mbox{  $I$} & \mbox{  $I$} &  \mbox{  $P_1+I$}  &  \mbox{  $I$}  &   \mbox{  $I$}  &  \mbox{  $I$} &   \mbox{  $P_1+I$}  &  \mbox{  $P_2+I$}  \\ 
	 \mbox{  $P_2+I$}  & \mbox{  $P_1+I$} &  \mbox{  $I$}  &  \mbox{  $I$}  &   \mbox{  $I$} &  \mbox{  $P_1+I$}  &  \mbox{  $I$}  & \mbox{  $I$} \\ 
	 \mbox{  $P_1+I$} & \mbox{  $P_2+I$}  &  \mbox{  $I$}  &  \mbox{  $I$}  &   \mbox{  $P_1+I$}  &  \mbox{  $I$}  &  \mbox{  $I$}  & \mbox{  $I$} \\ 
	\mbox{  $I$} & \mbox{  $I$} &  \mbox{  $P_2+I$}   &   \mbox{  $P_1+I$}  &  \mbox{  $I$}  &  \mbox{  $I$}  &  \mbox{  $I$}  & \mbox{  $P_1+I$} \\ 
	\mbox{  $I$} & \mbox{  $I$} &  \mbox{  $P_1+I$} &   \mbox{  $P_2+I$}   &  \mbox{  $I$}  &  \mbox{  $I$}  & \mbox{  $P_1+I$}  & \mbox{  $I$} 
	\end{array} \right).
	\end{align*}
The dimensions of this matrix are $N\times N$, with $N=2^{2r+3}$, and each of the blocks is a square matrix of order $2^{2r}$. 
We can write $A(G_3)$ as a sum of three matrices,
   \begin{equation}\label{eq:BP}
   A(G_3)=A^{(1)}+A^{(2)}+A^{(3)},
   \end{equation}
where $A^{(1)}=(I)_{8\times8}$ is the matrix formed of identity blocks, $A^{(2)}=\begin{pmatrix}B|B\\ \hline B|B\end{pmatrix},$ and
  \begin{equation}\label{eq:BP1}
  B=\begin{pmatrix}0&P_1&0&0\\P_1&0&0&0\\0&0&0&P_1\\0&0&P_1&0\end{pmatrix}, \quad
   A^{(3)}=\begin{pmatrix} 0&0&0&0&P_2&0&0&0\\0&0&0&0&0&P_2&0&0\\
   0&0&0&0&0&0&P_2&0\\0&0&0&0&0&0&0&P_2\\
   P_2&0&0&0&0&0&0&0\\0&P_2&0&0&0&0&0&0\\
   0&0&P_2&0&0&0&0&0\\0&0&0&P_2&0&0&0&0
   \end{pmatrix}.
    \end{equation}
Now, Lemma~\ref{lemma:H0} says that 
$\rank(P_1)=\rank(P_2)=2^{r}$, and thus from \eqref{eq:BP}, \eqref{eq:BP1}, the rank of $A(G_3)$ is at most
	$$
	\rank(A(G_3)) \leq  2^{2r} + 4\cdot 2^{r} +8\cdot2^r= 2^{2r} + \frac32\cdot 2^{r+3}. 
	$$
\end{proof}
For obvious reasons, below we call matrices of the form $A^{(3)}$ \emph{block permutation matrices}.

\begin{corollary}
The rate of the storage code $\cC_{3,r}=\ker(A(G_3))$ is 
    $$
    R(\cC_{3,r}) \ge 1 - \frac{2^{2r} + \frac32\cdot 2^{r+3}}{2^{2r+3}}=\frac78-\frac32\cdot2^{-r}.
    $$
\end{corollary}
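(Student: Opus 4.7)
The plan is essentially arithmetic: this corollary follows immediately from the preceding proposition together with the standard formula relating the dimension of a binary linear code to the rank of its parity-check matrix.

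First, I would record that the storage code $\cC_{3,r} = \ker(A(G_3))$ has length $N = |V(G_3)| = 2^{2r+3}$, and that since $0_{2r+3}$ has been included in the generating set $S$ (so the diagonal of $A(G_3)$ is all ones, making row recovery by full parity well-defined), the matrix $A(G_3)$ plays the role of a parity-check matrix for $\cC_{3,r}$. Hence $\dim(\cC_{3,r}) = N - \rank(A(G_3))$.

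Next, I would simply invoke the preceding proposition, which gives $\rank(A(G_3)) \le 2^{2r} + \tfrac{3}{2}\cdot 2^{r+3}$, and divide by $N = 2^{2r+3}$:
\[
R(\cC_{3,r}) \;=\; \frac{\dim(\cC_{3,r})}{N} \;\ge\; 1 - \frac{2^{2r} + \tfrac{3}{2}\cdot 2^{r+3}}{2^{2r+3}} \;=\; 1 - \frac{1}{8} - \frac{3}{2}\cdot 2^{-r} \;=\; \frac{7}{8} - \frac{3}{2}\cdot 2^{-r},
\]
using $2^{2r}/2^{2r+3} = 1/8$ and $\tfrac{3}{2}\cdot 2^{r+3}/2^{2r+3} = \tfrac{3}{2}\cdot 2^{-r}$.

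There is really no obstacle here; all the work has already been done in the proposition that bounds $\rank(A(G_3))$. The only thing worth double-checking is the length count $N = 2^{2r+3}$ (consistent with $H_3$ having $2r+3$ rows, so the Cayley graph is on $F^{2r+3}$) and the fact that the proof of the proposition built $A(G_3)$ as an $8 \times 8$ block matrix with blocks of order $2^{2r}$, giving $N = 8 \cdot 2^{2r} = 2^{2r+3}$, which matches.
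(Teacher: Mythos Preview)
Your proposal is correct and matches the paper's approach exactly: the corollary is stated without an explicit proof in the paper because it follows immediately from the preceding rank bound via $R = 1 - \rank(A(G_3))/N$ with $N = 2^{2r+3}$, precisely as you wrote.
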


The general induction step is not different from the transition from ${s}=2$ to 3. Namely, we form the matrix $H_{s}$ by performing
steps (1)-(4) described above on the matrix $H_{{s}-1}$. As we will show shortly, this results in storage codes of rate
$(2^{s}-1)/2^{s}-o(1)$. 

Following the construction procedure, we find that the matrix $H_{s}$ is formed of ${s}$ horizontal stripes, ${s}-1$ of which
contain zero matrices and several Hamming matrices $\fH_{r}$. The set of columns is formed of $2^{{s}-1}-1$
vertical stripes each of which contains one Hamming matrix and $2^{{s}-1}+1$ other columns counting the all-zeros column. Thus the dimensions of $H_{s}$ are
  $$
  (({s}-1)r+{s})\times ((2^{{s}-1}-1)(2^{r}-1)+2^{{s}-1}+1).
  $$
Accordingly, the vertex set of the graph $G_{s}$ is of size $N=2^{({s}-1)r+{s}}$.
The $N\times N$ matrix $A(G_{s})$ can be written as a $2^{s}\times 2^{s}$ block matrix with each block of size $2^{({s}-1)r}$.
The structure of the matrix is similar to $A(G_2)$ and $A(G_3)$: namely, it is a sum of several $2^{s}\times 2^{s}$ block matrices, one of which is formed of identity matrices only. The remaining matrices are block permutation matrices similar to the third term in \eqref{eq:BP}, where in each of them, the block in question is the (augmented) adjacency matrix of the Cayley graph of the kind given in Lemma \ref{lemma:H0}.
Namely, the set of generators of this Cayley graph is the set of columns of the matrix 
     $$
    M_j\coloneqq\big[0_{(2^{r}-1)\times (jr)}\mid\fH_{r}^\intercal\mid 0_{(2^{r}-1)\times(({s}-2-j)r)}\big]^\intercal,
     $$
for some $j=0,1,\dots,{s}-2$ 
(thus overall the matrices $M_j$ are of dimensions $(({s}-1)r)\times (2^{r}-1)$\;).  Lemma \ref{lemma:H0} implies that for each
of these Cayley graphs, the adjacency matrix is of rank $2^{({s}-2)r}$.
Let us write the $N\times N$ matrix $A(G_s)$ as a sum similar to \eqref{eq:BP}, 
   $$
   A(G_s)=\sum_{j=1}^s A^{(j)},
   $$
where $A^{(1)}=(I)_{2^s\times 2^s}$ is a matrix formed of identity blocks and $A^{(s)}$ is a block
permutation matrix. Further, $\rank(A^{(s)})=2^{(s-2)r+s}$, and for each of the matrices $A^{s-j},j=1,2,\dots,s-2,$ the rank $\rank(A^{(s-j)})=\frac12\rank (A^{(s-j+1)}).$ 
Combining these data, we find that
    \begin{align}
     \rank(A(G_{s}))&\le 
    2^{(s-1)r}+
    \Big(1+\frac12+\dots+\frac1{2^{s-2}}\Big)2^{({s}-2)r+{s}}\nonumber\\
    &\le 2^{({s}-1)r}+2^{({s}-2)r+{s}+1}
    \nonumber\\
       &= N(2^{-{s}}+2^{-r+1}).
       \label{eq:rank-ell}
    \end{align}
We obtain the following result.
\begin{theorem}\label{thm:Rsr}
For any ${s}\ge 2$ there exists a sequence of storage codes $\cC_{{s},r}$ on triangle-free graphs on $N=2^{({s}-1)r+{s}}$ vertices, $r=4,5,\dots$
of rate 
  \begin{equation}\label{eq:Rlr}
   R(\cC_{{s},r})\ge 1-2^{-s}-2^{-r+1}.
   \end{equation}
\end{theorem}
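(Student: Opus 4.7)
The proof plan is to proceed by induction on $s$, where the base case $s=2$ is already handled by Corollary~\ref{cor:bz22} and the case $s=3$ serves as a template for the general inductive step. For the induction, I would formalize the four-step recipe (adding the new top row, padding with $r$ zero rows at the bottom, replicating the new column $2^r-1$ times, and filling the new bottom-left block with $\fH_r$) as a map $H_{s-1}\mapsto H_s$, and then verify that the resulting dimensions match: $H_s$ has $(s-1)r+s$ rows and $(2^{s-1}-1)(2^r-1)+2^{s-1}+1$ columns, so the vertex set of $G_s=\Cay(F^{(s-1)r+s},S)$ has size $N=2^{(s-1)r+s}$.

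Connectedness and triangle-freeness would be established by induction, generalizing the case-by-case argument used for $s=3$. For connectedness, note that the columns involving only the new top stripe (the ``$S_4$-type'' columns) span the top $s$ coordinates independently, while the embedded blocks $H_{s-1}$ (inductively) allow us to reach any configuration of the remaining $(s-2)r+(s-1)$ coordinates; so every vertex is reachable from $0$. For triangle-freeness, one checks that any triple of nonzero columns is linearly independent by distinguishing cases based on how many of them live in the new top/bottom stripes versus inside the replicated $H_{s-1}$ structure --- each case reduces either to the inductive triangle-freeness of $H_{s-1}$, or to the Hamming-code property that no three columns of $\fH_r$ sum to zero while sitting in the appropriate prefix pattern.

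The core calculation is the rank bound. Apply Lemma~\ref{lem:9} with $\ell=s$ to write $A(G_s)$ as a $2^s\times 2^s$ array of blocks of size $2^{(s-1)r}$. Mirroring the decomposition observed in \eqref{eq:BP}--\eqref{eq:BP1} for $s=3$, I would decompose
\[
A(G_s)=A^{(1)}+A^{(2)}+\cdots+A^{(s)},
\]
where $A^{(1)}$ is the all-identity block matrix (rank $2^{(s-1)r}$), and each $A^{(j)}$ for $j\geq 2$ is a block permutation matrix whose nonzero entries are the (augmented) adjacency matrix of a Cayley graph whose generator set is (up to reordering of coordinates) the columns of
\[
M_{j}=\bigl[\,0_{(2^r-1)\times jr}\,\big|\,\fH_r^\intercal\,\big|\,0_{(2^r-1)\times((s-2-j)r)}\,\bigr]^\intercal
\]
together with the zero column. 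By Lemma~\ref{lemma:H0}, each such block has rank exactly $2^{(s-2)r}$. Since $A^{(j)}$ is a block permutation matrix with $2^{j-1}$ nonzero block diagonals (one can verify this pattern by tracking how each round of the construction doubles the number of blocks while halving the number of nonzero blocks per row-stripe), we obtain
\[
\rank(A^{(s-k)})=\tfrac{1}{2}\rank(A^{(s-k+1)}),\qquad k=1,\dots,s-2,
\]
with $\rank(A^{(s)})=2^{(s-2)r+s}$.

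Summing the geometric series as in \eqref{eq:rank-ell} gives
\[
\rank(A(G_s))\le 2^{(s-1)r}+\Bigl(1+\tfrac12+\cdots+\tfrac{1}{2^{s-2}}\Bigr)2^{(s-2)r+s}\le 2^{(s-1)r}+2^{(s-2)r+s+1}=N(2^{-s}+2^{-r+1}),
\]
from which the rate bound $R(\cC_{s,r})\ge 1-2^{-s}-2^{-r+1}$ is immediate. The main technical obstacle is the bookkeeping in step four: correctly identifying which block positions carry which $M_j$-type adjacency matrix and verifying the halving of nonzero blocks. This is essentially a recursive check that the top stripe added in going from $H_{s-1}$ to $H_s$ contributes exactly one new block permutation matrix $A^{(s)}$ with twice as many diagonals as the previous top-level permutation, while the existing matrices $A^{(2)},\dots,A^{(s-1)}$ are inherited (doubled in size) from the $s-1$ construction; an inductive statement cleanly expressing this block structure is what drives the argument.
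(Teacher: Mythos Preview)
Your approach is essentially the same as the paper's: both apply Lemma~\ref{lem:9} with $\ell=s$, decompose $A(G_s)=\sum_{j=1}^{s}A^{(j)}$ with $A^{(1)}$ the all-identity block matrix, invoke Lemma~\ref{lemma:H0} to bound the rank of each remaining summand, and sum the resulting geometric series exactly as in \eqref{eq:rank-ell}. One small inaccuracy: not every $A^{(j)}$ with $j\ge 2$ is a block permutation matrix at the $2^s\times 2^s$ level---only $A^{(s)}$ is; the earlier ones (as in $A^{(2)}$ for $s=3$, which is $\bigl(\begin{smallmatrix}B&B\\ B&B\end{smallmatrix}\bigr)$) have repeated-block structure whose rank collapses to that of a single $2^{j}\times 2^{j}$ block-permutation submatrix, which is what actually drives the halving $\rank(A^{(s-k)})=\tfrac12\rank(A^{(s-k+1)})$. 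This does not affect the rank bound, and your acknowledged ``bookkeeping'' step is precisely where this gets sorted out.
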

\begin{proof}
Since $R=1-\rank(A(G_{s}))/N$, the claim about the rate follows from \eqref{eq:rank-ell}.
\end{proof}
	
We have constructed an infinite family of code sequences $\cC_{{s},r}, r=4,5,\dots; {s}=2,3,\dots$ whose rates are given by \eqref{eq:Rlr}. 
Taking $s,r\to\infty,$ we obtain a sequence of codes of length $N=N(s,r)$ whose rate converges to one.
This proves Theorem~\ref{thm:main}.
	
\subsection*{\sc Acknowledgments} We thank the editors for a careful reading of our manuscript and helpful suggestions. We are grateful to Ishay Haviv for drawing our attention to  papers \cite{Codenotti2000} and \cite{GolovnevHaviv2021}. Following the release of our preprint, Hexiang Huang and Qing Xiang informed us of their forthcoming work (since then published as \cite{HuangXiang-pub-2023}) on another generalization of the construction of \cite{BargZemor2022}, also attaining code rate one. This research was partially supported by NSF grants CCF2110113 (NSF-BSF) and CCF2104489, and by a German Israeli Project Cooperation (DIP) Grant under Grant PE2398/1-1.

\end{document}